\declaretheorem[
  name=Theorem,
  Refname={Theorem,Theorems},
  numberwithin=section]{thm}
\declaretheorem[
  name=Lemma,
  Refname={Lemma,Lemmas},
  sibling=thm]{lem}
\declaretheorem[
  name=Assumption,
  Refname={Assumption,Assumptions}]{asu}
\declaretheorem[
  name=Remark,
  Refname={Remark,Remarks}]{rmk}
\declaretheorem[
  name=Example,
  Refname={Example,Examples}]{ex}
\begin{document}

\title{\bf Distal Causal Excursion Effects: Modeling Long-Term Effects of Time-Varying Treatments in Micro-Randomized Trials}
\author{Tianchen Qian \\
    {\small Department of Statistics, University of California, Irvine}}
\date{}
\maketitle

\spacingset{1.4}
\begin{abstract}
    Micro-randomized trials (MRTs) play a crucial role in optimizing digital interventions. In an MRT, each participant is sequentially randomized among treatment options hundreds of times. While the interventions tested in MRTs target short-term behavioral responses (proximal outcomes), their ultimate goal is to drive long-term behavior change (distal outcomes). However, existing causal inference methods, such as the causal excursion effect, are limited to proximal outcomes, making it challenging to quantify the long-term impact of interventions. To address this gap, we introduce the distal causal excursion effect (DCEE), a novel estimand that quantifies the long-term effect of time-varying treatments. The DCEE contrasts distal outcomes under two excursion policies while marginalizing over most treatment assignments, enabling a parsimonious and interpretable causal model even with a large number of decision points. We propose two estimators for the DCEE---one with cross-fitting and one without---both robust to misspecification of the outcome model. We establish their asymptotic properties and validate their performance through simulations. We apply our method to the HeartSteps MRT to assess the impact of activity prompts on long-term habit formation. Our findings suggest that prompts delivered earlier in the study have a stronger long-term effect than those delivered later, underscoring the importance of intervention timing in behavior change. This work provides the critically needed toolkit for scientists working on digital interventions to assess long-term causal effects using MRT data.
\end{abstract}

\noindent%
{\it Keywords:} causal excursion effect; distal outcome; micro-randomized trial; mobile health; time-varying treatment
\vfill

\newpage

\tableofcontents

\newpage

\spacingset{1.9} 

\section{Introduction}
\label{sec:introduction}

Micro-randomized trials (MRTs) are widely used to develop and optimize digital adaptive interventions \citep{klasnja2015microrandomized,liao2016sample}, with applications in health, education, and information systems \citep{walton2018optimizing,breitwieser2024realizing,pieper2024micro}. In an MRT, each participant is sequentially randomized among treatment options (e.g., receiving or not receiving a push notification encouraging behavioral change). These interventions directly target \textit{proximal outcomes} (short-term behavioral responses, such as step count in the next 30 minutes), but ultimately aim to influence \textit{distal outcomes} (long-term behaviors and habit formation).

To analyze MRT data, the causal excursion effect (CEE) was introduced to model the effect of time-varying treatments on proximal outcomes \citep{boruvka2018}. It has become the standard for primary and secondary analyses in MRTs \citep{klasnja2018efficacy,liu2023microrandomized}.  However, this estimand does not address distal outcomes, leaving key scientific questions unanswered. For example, in a physical activity study \citep{klasnja2015microrandomized}, do activity prompts lead to sustained habit formation? Current practice relies on behavioral theory to assume long-term effects, highlighting the need for statistical methods to infer them from MRT data.

To address this critical need, we propose the distal causal excursion effect (DCEE) to quantify the impact of time-varying treatments on a distal outcome measured at the end of an MRT. The DCEE contrasts distal outcomes under two treatment policies representing excursions (i.e., deviations) from the MRT policy. By marginalizing over the stochastic treatment assignments at most decision points, the DCEE enables a parsimonious causal effect model even with many decision points; a similar idea was used in CEE for proximal outcomes. This marginalization addresses a key challenge in MRT analysis: treatments are randomized at hundreds or thousands of decision points. Classical models such as marginal structural models and structural nested models \citep{robins2000marginalstructural} work well with a few treatment occasions but become impractical in MRTs with many decision points unless restrictive and often unrealistic assumptions are imposed \citep{rudolph2022estimation}.

To estimate the DCEE, we develop two estimators---one with cross-fitting and one without---both robust to outcome model misspecification. We establish their asymptotic properties and validate finite-sample performance through simulations. We apply the method to the HeartSteps MRT to assess the long-term effect of activity suggestions on habit formation at study end. Our analysis reveals that suggestions delivered earlier, despite being farther from the distal outcome, have a larger effect than those delivered later. We discuss the implications for designing future digital interventions.

\cref{sec:notation} defines the DCEE. \cref{sec:methods} presents the estimators and the asymptotic theory. \cref{sec:simulation} presents the simulation study. \cref{sec:application} presents the real data application. \cref{sec:discussion} concludes with discussion.

\section{DCEE Definition}
\label{sec:notation}

\subsection{Notation: MRT with a Distal Outcome}
\label{subsec:mrt-data-structure}

Consider an MRT with $n$ individuals, each in the trial for $T$ decision points where treatment is randomized. Variables without subscript $i$ refer to a generic individual. Let $A_t$ denote the binary treatment at decision point $t$: $A_t = 1$ for treatment, $A_t = 0$ for no treatment. Let $X_t$ denote observations recorded between decision points $t-1$ and $t$. The distal outcome $Y$, typically the primary health outcome, is measured at study end. The observed data trajectory for one individual is $O = (X_1,A_1,\ldots,X_T,A_T,Y)$. Overbars indicate sequences of variables; for example, $\bA_t = (A_1,\ldots,A_t)$. The history up to $t$ (excluding $A_t$) is $H_t = (X_1,A_1,\ldots,X_{t-1},A_{t-1}, X_t)$. At each $t$, $A_t$ is assigned with randomization probability $p_t(a|H_t) := P(A_t = a \mid H_t)$ for $a \in \{0,1\}$. Data from individuals are assumed independent and identically distributed samples from the distribution $P_0$, and expectations are taken under $P_0$ unless stated otherwise. \cref{fig:data-structure} illustrates the data structure.

\begin{figure}[tb]
    \centering  
        \begin{tikzpicture}[
        -Latex,auto,node distance =0.5 cm and 0.5 cm,semithick,
        state/.style ={circle, draw, minimum size = 0.7 cm, inner sep = 1pt},
        point/.style = {circle, draw, inner sep=0.04cm,fill,node contents={}},
        bidirected/.style={Latex-Latex,dashed},
        el/.style = {inner sep=2pt, align=left, sloped}]

        \node[state] (x1) at (0,0) {$X_1$};

        \node[state] (a1) [right =of x1] {$A_1$};

        \path (x1) edge (a1);


        \node[state] (x2) [right =0.7cm of a1] {$X_2$};
        \node[state] (a2) [right =of x2] {$A_2$};

        \path (x2) edge (a2);

        \path (a1) edge (x2);


        \node[draw=none] (dots1) [right =0.7cm of a2] {$\cdots$};
        \path [arrows = {-Computer Modern Rightarrow[line cap=round]}] (a2) edge[line width = 1mm] (dots1);

        \path (x1) edge[bend left=40] (x2);
        \path (x1) edge[bend left=50] (a2);
        \path (a1) edge[bend left=40] (a2);

        \node[state] (xt) [right =0.7cm of dots1] {$X_t$};
        \node[state] (at) [right =of xt] {$A_t$};

        \path (xt) edge (at);

        \path [arrows = {-Computer Modern Rightarrow[line cap=round]}] (dots1) edge[line width = 1mm] (xt);

        \node[draw=none] (dots2) [right =0.7cm of at] {$\cdots$};
        \path [arrows = {-Computer Modern Rightarrow[line cap=round]}] (at) edge[line width = 1mm] (dots2);

        \node[state] (xT) [right =0.7cm of dots2] {$X_T$};
        \node[state] (aT) [right =of xT] {$A_T$};

        \path (xT) edge (aT);

        \path [arrows = {-Computer Modern Rightarrow[line cap=round]}] (dots2) edge[line width = 1mm] (xT);

        \node[state] (y) [right =0.7cm of aT] {$Y$};
        \path [arrows = {-Computer Modern Rightarrow[line cap=round]}] (aT) edge[line width = 1mm] (y);

        \draw [decorate,decoration={brace,amplitude=3mm,mirror,raise=3mm}, -] (x1.south west) -- (xt.south east);

        \node[draw = none] (ht) [below = 0.5cm of a2] {$H_t$};
    \end{tikzpicture}
    \caption{MRT data structure with a distal outcome. To simplify notation, a thick arrow is used to denote arrows from all nodes on the left to all nodes on the right.}
    \label{fig:data-structure}
\end{figure}

In designing the MRT, researchers may deem it unsafe or unethical to deliver push notifications at certain times, such as when a participant is driving. At these decision points, the participant is considered ineligible for randomization, and no treatment is delivered. Formally, $X_t$ includes an indicator $I_t$, with $I_t = 1$ denoting being eligible for randomization at $t$, and $I_t = 0$ otherwise. If $I_t = 0$, then $A_t = 0$ deterministically. In the MRT literature, $I_t = 0$ is also called being unavailable for treatment \citep{boruvka2018}; here, we use ``ineligible'' for clarity. \revision{For simplicity, we sometimes explicitly write $I_t$ together with $H_t$ in conditional expectations, even though $I_t$ is part of $H_t$: for instance, $\EE(\cdot | H_t, I_t = 1)$ means $\EE(\cdot | H_t \setminus \{I_t\}, I_t = 1)$.}

Let $\PP_n$ denote the empirical mean over everyone. For any positive integer $k$, define $[k] := \{1, 2, \ldots, k\}$. The superscript $\star$ is used to indicate quantities associated with the true data-generating distribution $P_0$. For a vector $\alpha$ and a vector-valued function $f(\alpha)$, the notation $\partial_\alpha f(\alpha) := \partial f(\alpha) / \partial \alpha^T$ denotes the Jacobian matrix, where the $(i,j)$-th entry corresponds to the partial derivative of the $i$-th entry of $f$ with respect to the $j$-th entry of $\alpha$.

\subsection{Distal Causal Excursion Effect (DCEE)}
\label{subsec:def-cee}

To define the causal effect, we use the potential outcomes notation \citep{rubin1974estimating,robins1986new}. Lowercase letters denote instantiations (non-random values) of corresponding random variables; for example, $a_t$ is an instantiation of $A_t$. For each individual, let $X_t(\ba_{t-1})$ be the value of $X_t$ that would have been observed at $t$ if the individual were assigned treatment sequence $\ba_{t-1}$. The potential outcome of $H_t$ under $\ba_{t-1}$ is $H_t(\ba_{t-1}) = \{X_1, a_1, X_2(a_1), a_2, \ldots, X_t(\ba_{t-1})\}$. The potential outcome of $Y$ under $\ba_T$ is $Y(\ba_T)$.

We use the term \textit{policy} to refer to any decision rule (static or dynamic) for assigning $A_t$ given the history $H_t$ for $t \in [T]$ \citep{murphy2003optimal}. The MRT policy, denoted by $D := (d_1,\ldots, d_T)$, is the rule for stochastically assign $A_t$ at each $t$ in the MRT. Specifically, $d_t(\cdot)$ is a stochastic mapping from $H_t$ to $\{0,1\}$, where $d_t(H_t) = a$ with probability $P(A_t = a \mid H_t)$ for $a \in \{0,1\}$. For each $t$, we consider two alternative decision rules for defining the causal effect: $d_t^1$, which always assigns $A_t = 1$ unless $I_t = 0$ (in which case $A_t = 0$), and $d_t^0$, which always assigns $A_t = 0$. In other words, $d_t^1(H_t) = I_t$, and $d_t^0(H_t) = 0$. For $a\in\{0,1\}$, let $D_{d_t = d_t^a}$ denote the policy obtained by replacing $d_t$ with $d_t^a$ in the MRT policy $D$:
\begin{align}
    D_{d_t = d_t^a} := (d_1,\ldots,d_{t-1},d_t^a,d_{t+1},\ldots,d_T). \label{eq:def-alternative-policy}
\end{align}
$D_{d_t = d_t^1}$ and $D_{d_t = d_t^0}$ are two \textit{excursions} \citep{guo2021discussion} from the MRT policy $D$: they deviate from $D$ at decision point $t$ to always assign $A_t = 1$ (while respecting eligibility) or always assign $A_t = 0$, respectively.

Let $S_t$ denote a subset of $H_t$, representing moderators of interest. For example, setting $S_t = \emptyset$ defines a fully marginal effect averaging over all moderators; setting $S_t = A_{t-1}$ captures effect moderation by previous intervention; and setting $S_t = X_t$ captures effect moderation by current covariates. Note that $S_t$ can be time-varying. We define the \textit{distal causal excursion effect} (DCEE) of $A_t$ on $Y$ as
\begin{align}
    \tau(t,s) := \EE \Big\{ Y(D_{d_t = d_t^1}) - Y(D_{d_t = d_t^0}) \mid S_t(\bd_{t-1}) = s \Big\}. \label{eq:def-cee}
\end{align}
$\tau(t,s)$ captures the difference in the potential outcomes of $Y$ under two excursion policies, $D_{d_t = d_t^1}$ and $D_{d_t = d_t^0}$, conditional on $S_t$. In other words, $\tau(t,s)$ contrasts what would have happened if a participant followed the MRT policy throughout the study except at decision point $t$, where they received $A_t = I_t$ versus $A_t = 0$. \revision{The way $I_t$ enters the definition here differs from its role in defining (proximal) causal excursion effects \citep{boruvka2018}; we elaborate on this distinction further in \cref{rmk:dcee_vs_cee}.}

Below we present several examples with different data-generating distributions to illustrate the DCEE and highlight its scientific relevance for digital interventions. In all examples, we assume larger $Y$ is desirable, the error term $\epsilon$ is exogenous with mean 0, and set $S_t = \emptyset$ to focus on fully marginal effects $\tau(t)$. A time-varying variable is exogenous if it is independent of its own history and the history of other variables. More general versions of these examples, along with detailed derivations, are in Supplementary Material \ref{A-sec:dcee-examples}. Importantly, the estimators proposed in \cref{sec:methods} remain valid as long as the assumptions in \cref{thm:normality} hold; they do not require knowledge of most aspects of the data-generating process, such as how $Y$ depends on $A_t$ and $X_t$ or how $A_t$ influences $X_{t+1}$ and future eligibility $I_{t+1}$.

\begin{ex}[DCEE marginalizes over effect modifiers not in $S_t$]
    \label{ex:marginalize-over-effect-modifiers}
    \normalfont
    Suppose that for all $t\in[T]$, the covariate $X_t$ is exogenous, all individuals are always eligible ($I_t \equiv 1$), and treatment $A_t \sim \bern(p)$ is exogenous. Suppose $Y = g(\bX_T) + \sum_{t=1}^T A_t (\alpha_t + \beta_t X_t) + \epsilon$ for some unspecified $g$ and $E(\epsilon) = 0$. Since we set $S_t = \emptyset$, the DCEE $\tau(t)$ represents the fully marginal effect of $A_t$, given by $\tau(t) = \alpha_t + \beta_t ~\EE(X_t)$. This illustrates how the DCEE naturally marginalizes over effect modifiers not explicitly included in $S_t$.
\end{ex}

\begin{ex}[DCEE captures user burden]
    \label{ex:treatment-burden}
    \normalfont
    Consider the same data-generating process as \cref{ex:marginalize-over-effect-modifiers}, but instead suppose that there is a negative interaction between consecutive treatments: $Y = g(\bX_T) + \sum_{t=1}^T \beta_t A_t - \sum_{t=1}^{T-1} \alpha_t A_t A_{t+1} + \epsilon$ with $\alpha_t \geq 0$. Here, treatment effects diminish when treatments are delivered in succession, capturing \textit{user burden}. The average-over-time effect is $\frac{1}{T}\sum_{t=1}^T \tau(t) = \frac{1}{T} \sum_{t=1}^T \{\beta_t - (\alpha_t + \alpha_{t-1})p \}$, where $\alpha_0 := 0$. The term $-(\alpha_t + \alpha_{t-1})p$ reflects the deterioration of treatment effects due to user burden, which increases with the treatment probability $p$. Thus, the DCEE automatically accounts for user burden, and as a result different MRTs with different $p$ will yield different DCEE.
\end{ex}

\begin{ex}[DCEE combines direct and indirect effects]
    \label{ex:direct-indirect-effects}
    \normalfont
    Consider a two-timepoint setting ($T=2$), where $X_1$ is exogenous, all individuals are always eligible ($I_1 = I_2 \equiv 1$), treatments $A_1, A_2 \sim \bern(p)$ are exogenous, and the treatment $A_1$ influences the time-varying covariate $X_2$ through $X_2 \mid A_1 \sim \bern(\rho_0 + \rho_1 A_1)$. Suppose $Y = \gamma_0 + \gamma_1 X_1 + \gamma_2 X_2 + \beta_1 A_1 + \beta_2 A_2 + \epsilon$. In this case, the DCEE at $t=1$ is $\tau(1) = \beta_1 + \gamma_2 \rho_1$. This expression captures both the direct effect of $A_1$ on $Y$ ($\beta_1$) and the indirect effect mediated through $X_2$ ($\gamma_2 \rho_1$).
\end{ex}

\begin{ex}[DCEE captures treatment effects on future eligibility]
    \label{ex:eligibility}
    \normalfont
    Consider a two-timepoint setting ($T=2$), where all individuals are eligible at $t=1$ ($I_1 \equiv 1$), treatment at $t=1$ influences future eligibility: $I_2 \mid A_1 \sim \bern(\rho_0 - \rho_1 A_1)$, and the treatment assignments follow $A_1 \sim \bern(p)$ and $A_2 \mid I_2 = 1 \sim \bern(p)$. Suppose $Y = \beta_0 + \beta_1 A_1 + \beta_2 A_2 + \epsilon$. Assume $\beta_1, \beta_2, \rho_0, \rho_1 > 0$. In this case, the DCEE at $t=1$ is $\tau(1) = \beta_1 - \beta_2 p \rho_1$. Here, the term $-\beta_2 p \rho_1$ captures the reduced effect of $A_1$ on $Y$ due to its negative impact on future eligibility.  This demonstrates how the DCEE accounts for delayed effects of treatment via eligibility constraints, a key consideration in MRTs as eligibility may depend on past treatment history.
\end{ex}

\begin{rmk}[Differences between DCEE and CEE]
    \label{rmk:dcee_vs_cee}
    \normalfont
    The DCEE differs from the causal excursion effect (CEE) \citep{boruvka2018} in several key ways: (i) Outcome focus: DCEE focuses on the end-of-study distal outcome, whereas CEE focuses on short-term proximal outcomes. (ii) Policy comparison: DCEE contrasts treatment policies on the entire sequence $A_1,\ldots,A_T$, while CEE contrasts policies up to time $t$ (or up to $A_{t+\Delta-1}$ for some prespecified $\Delta \geq 1$). (iii) Eligibility handling: DCEE marginalizes over eligibility and compares policies that respect eligibility constraints, whereas CEE conditions on eligibility at a given decision point. This last distinction is especially important: DCEE accounts for the long-term consequences of treatment on future eligibility. Thus, a treatment that is effective but substantially reduces future eligibility may yield an attenuated DCEE (see \cref{ex:eligibility}). This issue does not arise with CEE, which focuses only on short-term outcomes. \revision{Consequently, one cannot simply repurpose CEE estimation methods (e.g., weighted and centered least squares \citep{boruvka2018}) to estimate the DCEE by treating the distal outcome as a proximal outcome at each decision point. Such an approach would fail to account for eligibility probability, thus resulting in biased DCEE estimates. This is illustrated in our simulation results (\cref{sec:simulation}).}   \qed
\end{rmk}

\begin{rmk}[Differences between DCEE and other causal estimands]
    \normalfont
    The DCEE differs from marginal structural models (MSMs) and structural nested mean models (SNMMs) \citep{robins2000marginalstructural}. In an MRT, a MSM models $Y$ under fixed policies, and a SNMM models a treatment blip effect assuming future treatments are set to zero. Without strong (often unrealistic) assumptions, both require more parameters than MRT data can support. In contrast, DCEE compares policies that deviate at a single decision point from the MRT policy, thus enabling parsimonious, interpretable modeling.  \qed
\end{rmk}

\begin{rmk}[Differences between DCEE and dynamic treatment regime]
    \normalfont
    \revision{The DCEE differs from dynamic treatment regimes (DTRs) \citep{murphy2003optimal,robins2004optimal} in several aspects. First, DTR approaches contrast entire sequences of decision rules (full dynamic policies), whereas DCEE compares ``excursion'' policies that only differ at a single decision point and match the reference policy elsewhere. Second, the DTR framework aims to optimize distal outcomes by identifying an optimal policy, while DCEE focuses on evaluating the long-term effect of a local treatment decision without optimizing across sequences. Third, DTR is most suitable for designs with few decision points (e.g., sequential multiple assignment randomized trials \citep{murphy2005smart}), while DCEE is suitable for MRTs with many decision points. These distinctions reflect a bias-variance trade-off: DCEE defines a more parsimonious estimand suited for intensive longitudinal settings.}
\end{rmk}

\subsection{Causal Assumptions and Identification}
\label{subsec:causal-assumptions-and-identification}

We make the following standard causal assumptions regarding data from an MRT.
\begin{asu}
    \label{asu:causal-assumptions}
    \normalfont
    \spacingset{1.5}
    \begin{asulist}
        \item \label{asu:consistency} (SUTVA.) There is no interference across individuals and the observed data equals the potential outcome under the observed treatment. As a result, $X_t = X_t(\bA_{t-1})$ for $t\in[T]$ and $Y = Y(\bA_T)$.
        \item \label{asu:positivity} (Positivity.) There exists a positive constant $\tau > 0$, such that if $P(H_t = h_t, I_t = 1) > 0$ then $\tau < P(A_t = a \mid H_t = h_t, I_t = 1) < 1-\tau$ for $a \in \{0, 1\}$.
        \item \label{asu:sequential-ignorability} (Sequential ignorability.) For $1 \leq t \leq T$, the potential outcomes $\big\{X_{t+1}(\ba_t)$, $X_{t+2}(\ba_{t+1})$, $\ldots, X_T(\ba_{T-1}), Y(\ba_T): \ba_T \in \{0,1\} \times \{0,1\} \times \cdots \times \{0,1\} \big\}$ are conditionally independent of $A_t$ given $H_t$.
    \end{asulist}
\end{asu}
Positivity and sequential ignorability are guaranteed by the MRT design. SUTVA holds for most MRTs without the social aspect but is violated under interference, where one participant's treatment affects another's outcome. In such cases, frameworks incorporating causal interference are needed \citep{hudgens2008toward,shi2022assessing}, which we do not consider here. In Supplementary Material \ref{A-sec:proof-identification}, we prove the following identification result for $\tau(t,s)$.
\revision{\begin{thm}
    \label{thm:identification}
    Under \cref{asu:causal-assumptions}, we have
    \begin{align}
        \tau(t,s) 
        & = \EE \bigg\{ \frac{\indic(A_t = I_t)}{P(A_t = I_t \mid H_t)} Y - \frac{\indic(A_t = 0)}{P(A_t = 0 \mid H_t)} Y ~\bigg|~ S_t = s \bigg\}. \label{eq:identify-cee-ipw} \\
        & = \EE \Big\{ \EE (Y \mid H_t, A_t = I_t ) - \EE (Y \mid H_t, A_t = 0 ) \mid S_t = s \Big\} \label{eq:identify-cee-ie} \\
        & = \EE(I_t) \EE \Big\{ \EE(Y \mid H_t, I_t = 1, A_t = 1) - \EE(Y \mid H_t, I_t = 1, A_t = 0) \mid S_t = s, I_t = 1 \Big\}. \label{eq:identify-cee-ie2}        
    \end{align}
\end{thm}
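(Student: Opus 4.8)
The plan is to show that the contrast of the two \emph{stochastic} excursion policies in \cref{eq:def-cee} collapses to an ordinary single-decision-point g-formula evaluated at time $t$; the inverse-probability-weighted representation \cref{eq:identify-cee-ipw} and the second outcome-regression representation \cref{eq:identify-cee-ie2} then follow from \cref{eq:identify-cee-ie} by elementary re-expression.

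First, because $D_{d_t = d_t^1}$ and $D_{d_t = d_t^0}$ coincide with the MRT policy $D$ at every decision point other than $t$, and $D$ merely reproduces the realized randomization, following $\bd_{t-1}$ is the same as observing; hence $H_t(\bd_{t-1}) = H_t$, so the conditioning event $\{S_t(\bd_{t-1}) = s\}$ equals $\{S_t = s\}$. Next---the substantive step---I would show that, given $(H_t, A_t)$, the continuation $(X_{t+1}, A_{t+1}, \dots, X_T, A_T, Y)$ generated under either excursion policy has the same conditional law as in the observed data; I would prove this by induction over $s = t+1, \dots, T$, invoking at each step the sequential ignorability assumption (so that the potential covariate $X_{s}(\ba_{s-1})$ is independent of $A_{s-1}$ given the history), the consistency assumption (to identify its conditional law with the observed one), and the fact that $A_{s}$ is drawn from $p_{s}(\cdot \mid H_{s})$ under both regimes because the excursion policy agrees with $D$ after $t$; the same argument at $s = T$ handles $Y$. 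Combining these two facts with the observation that at $t$ the treatment $A_t = d_t^a(H_t)$ is assigned deterministically given $H_t$, and using the positivity assumption (noting $d_t^1(H_t) = I_t$ and $d_t^0(H_t) = 0$ always respect eligibility) so that $\EE\big\{ Y \mid H_t, A_t = d_t^a(H_t) \big\}$ is well defined, I obtain
\[
    \EE\big\{ Y(D_{d_t = d_t^a}) \mid H_t \big\} \;=\; \EE\big\{ Y \mid H_t,\; A_t = d_t^a(H_t) \big\}, \qquad a \in \{0,1\} .
\]
Taking the conditional expectation given $S_t = s$ (a function of $H_t$), differencing $a=1$ against $a=0$, and substituting $d_t^1(H_t) = I_t$ and $d_t^0(H_t) = 0$ yields \cref{eq:identify-cee-ie}.

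For \cref{eq:identify-cee-ipw}, I would apply the Horvitz--Thompson identity conditionally on $H_t$: since $I_t$ and $S_t$ are functions of $H_t$ and, by positivity, $P(A_t = I_t \mid H_t)$ and $P(A_t = 0 \mid H_t)$ are bounded away from $0$, one has $\EE\big\{ \indic(A_t = a')\, Y / P(A_t = a' \mid H_t) \mid H_t \big\} = \EE(Y \mid H_t, A_t = a')$ for $a' \in \{I_t, 0\}$, and the tower property over $S_t$ then recovers \cref{eq:identify-cee-ie}. For \cref{eq:identify-cee-ie2}, I would condition \cref{eq:identify-cee-ie} further on $I_t$: the $\{I_t = 0\}$ contribution vanishes because $d_t^1(H_t) = d_t^0(H_t) = 0$ there, while on $\{I_t = 1\}$ the two conditional means may be written as $\EE(Y \mid H_t, I_t = 1, A_t = 1)$ and $\EE(Y \mid H_t, I_t = 1, A_t = 0)$; isolating the eligibility indicator and applying iterated expectations once more produces the stated product form.

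The step I expect to be the main obstacle is the induction in the second paragraph---verifying that a stochastic excursion policy that agrees with the MRT randomization off of decision point $t$ induces, conditionally on $(H_t, A_t)$, exactly the observed continuation law---since this is where consistency and sequential ignorability must be threaded carefully through the entire post-$t$ trajectory. Once that is established, the passage to \cref{eq:identify-cee-ie}, \cref{eq:identify-cee-ipw}, and \cref{eq:identify-cee-ie2} is routine bookkeeping.
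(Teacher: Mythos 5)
Your proposal is correct and follows essentially the same route as the paper's proof: the paper also reduces $\tau(t,s)$ to the single-decision-point outcome-regression form $\EE[\EE\{Y\mid H_t, A_t=d_t^a(H_t)\}\mid S_t]$ by a chain of iterated expectations that applies consistency and sequential ignorability to the post-$t$ trajectory (it conditions on $\bA_{t+1:T}$ in one telescoping step rather than your point-by-point induction, but the mechanism is identical), and it likewise obtains the IPW form from the fact that $A_t=d_t^a(H_t)$ has conditional probability one under the excursion rule, and the third form by splitting on $I_t$ with the $I_t=0$ contribution vanishing. No gaps.
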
}

\section{Robust Estimators for DCEE}
\label{sec:methods}

Let $f(t,s)$ be a pre-specified $p$-dimensional feature vector that depends on the decision point index and the effect modifier value. We consider estimating the best linear projection of $\tau(t,s)$ on $f(t,s)$, averaged over decision points. Specifically, the true parameter $\beta^\star \in \RR^p$ is defined as
\begin{align}
    \beta^\star = \arg\min_{\beta \in \RR^p} \sum_{t=1}^T \omega(t) \EE\Big[\{ \tau(t,S_t) - f(t,S_t)^T \beta \}^2\Big], \label{eq:beta-star-def}
\end{align}
where $\omega(t)$ is a pre-specified weight function with $\sum_{t=1}^T \omega(t) = 1$. This is similar to linearly parameterizing $\tau(t,S_t) = f(t,S_t)^T \beta^\star$, but by defining $\beta^\star$ this way ensures interpretability even when the linear model is misspecified.

Researchers can choose different $f(t,s)$ and $\omega(t)$ based on the scientific question. For example, when $S_t = \emptyset$, setting $f(t,s) = (1,t)$ or $(1,t, t^2)$ models how the DCEE varies linearly or quadratically over the decision point where the excursion takes place. Basis functions of $t$ can also be included for flexibility (see \cref{sec:application}). When $S_t \neq \emptyset$, $f(t,s)$ can be additive in $t,s$ or include interactions. For weights $\omega(t)$, setting $\omega(t) = 1/T$ equally weights all decision points in estimating $\beta$. On the other hand, if the goal is to estimate $\tau(t,s)$ at a particular $t = t_0$, one can set $\omega(t_0) = 1$ and $\omega(t) = 0$ for all $t \neq t_0$.

\begin{rmk}[$\beta^\star$ as a weighted average]
    \label{rmk:beta-as-weighted-average}
    \normalfont
    When $\omega(t) = 1/T$, $S_t = \emptyset$, and $f(t,s) = 1$ for all $t \in [T]$, $\beta^\star$ can be shown to adopt the following form:
    \revision{\begin{align}
        \beta^\star = \frac{1}{T} \sum_{t=1}^T \EE(I_t) \EE \Big\{ \EE(Y | H_t, I_t = 1, A_t = 1) - \EE(Y | H_t, I_t = 1, A_t = 0) \Big\}. \label{eq:beta-star-projection-1}
    \end{align}}
    This is a time-averaged, fully marginal (i.e., not conditional on any effect modifiers) excursion effect of $A_t$ on $Y$ when the participant is eligible, discounted by the probability of being eligible at each decision point. For general $S_t$ and $f(t,s)$, when $\omega(t) = 1/T$, $\beta^\star$ can be interpreted as a weighted average of the moderated excursion effects:
    \begin{align}
        \beta^\star = & \bigg[\sum_{t=1}^T \EE \big\{f(t,S_t)f(t,S_t)^T \big\} \bigg]^{-1} \nonumber\\
        & \times \sum_{t=1}^T \EE(I_t) \EE \Big[ \big\{ \EE(Y | H_t, A_t = 1, I_t = 1) - \EE(Y | H_t, A_t = 1, I_t = 0) \big\} f(t,S_t) \Big]. \label{eq:beta-star-projection-2}
    \end{align}
    The weighted average form of $\beta^\star$ for more general settings is provided and proved in the Supplementary Material \ref{A-sec:beta-star-projection}, which immediately implies \eqref{eq:beta-star-projection-1} and \eqref{eq:beta-star-projection-2}. \qed
\end{rmk}

The identification equation \eqref{eq:identify-cee-ipw} motivates a preliminary estimating function for $\beta$:
\revision{\begin{align}
    \xi(\beta) := \sum_{t=1}^T \omega(t) \bigg\{ \frac{\indic(A_t = I_t)}{P(A_t = I_t \mid H_t)} Y - \frac{\indic(A_t = 0)}{P(A_t = 0 \mid H_t)} Y - f(t, S_t)^T\beta \bigg\} f(t,S_t). \label{eq:est-fcn-pre-proj}
\end{align}}
We further subtract from $\xi(\beta)$ its projection on the score functions of the treatment selection probabilities to obtain a more efficient estimating function \citep{robins1999testing}:
\begin{align}
    \phi(\beta, \mu) := \sum_{t=1}^T \omega(t) \bigg[ I_t \frac{(-1)^{1-A_t}}{p_t(A_t \mid H_t)} \Big\{ Y - p_t(0\mid H_t) \mu_t(H_t, 1) - p_t(1\mid H_t) \mu_t(H_t, 0) \Big\} - f(t, S_t)^T\beta \bigg] f(t,S_t). \label{eq:est-fcn-post-proj}
\end{align}
Here, $\mu = (\mu_1, \ldots, \mu_T)$ is an infinite-dimensional nuisance parameter with truth $\mu_t^\star(H_t, A_t) = \EE(Y \mid H_t, A_t)$. A detailed derivation of \eqref{eq:est-fcn-post-proj} is in the Supplementary Material \ref{A-sec:est-fcn-post-proj-derivation}.

The form of $\phi(\beta, \mu)$ motivates two estimators, $\widehat\beta$ and $\widetilde\beta$, as depicted in Algorithms \ref{algo:estimator-ncf} and \ref{algo:estimator-cf}. $\widehat\beta$ does not use cross-fitting and $\widetilde\beta$ uses cross-fitting \citep{chernozhukov2018double}.

\begin{algorithm}[htbp]
    \caption{A two-stage estimator $\widehat\beta$ (without cross-fitting)}
    \label{algo:estimator-ncf}
    \spacingset{1.5}
    \vspace{0.3em}
    \textbf{Stage 1:} Fit $\EE (Y \mid H_t, A_t)$ for $t \in [T]$. Denote the fitted model by $\widehat\mu_t(H_t, A_t)$. Denote $\widehat\mu := (\widehat\mu_1,\ldots,\widehat\mu_T)$.

    \textbf{Stage 2:} Obtain $\widehat\beta$ by solving $\PP_n \phi(\beta, \widehat\mu) = 0$ with $\phi$ defined in \eqref{eq:est-fcn-post-proj}.
    \vspace{0.3em}
\end{algorithm}

\begin{algorithm}[htbp]
    \caption{A two-stage estimator $\widetilde\beta$ (with cross-fitting)}
    \label{algo:estimator-cf}
    \spacingset{1.5}
    \vspace{0.3em}
    \textbf{Stage 1:} Take a $K$-fold equally-sized random partition $(B_k)_{k=1}^K$ of observation indices $[n] = \{1,\ldots,n\}$. Define $B_k^c = [n] \setminus B_k$ for $k \in [K]$. For each $k \in [K]$, use solely observations from $B_k^c$ and fit $\EE (Y \mid H_t, A_t)$ for $t \in [T]$. The fitted models using $B_k^c$ are denoted by $\widehat\mu_{kt}$, and let $\widehat\mu_k := (\widehat\mu_{k1},\ldots,\widehat\mu_{kT})$.

    \textbf{Stage 2:} Obtain $\widetilde\beta$ by solving $K^{-1} \sum_{k=1}^K \PP_{n,k} \phi(\beta, \widehat\mu_k) = 0$. Here $\PP_{n,k}$ denotes empirical average over observations from $B_k$.
    \vspace{0.3em}
\end{algorithm}

We establish the asymptotic normality for $\widehat\beta$ and $\widetilde\beta$ in \cref{thm:normality}, which is proved in the Supplementary Material \ref{A-sec:proof-normality}.

\begin{thm}[Asymptotic normality of $\widehat\beta$ and $\widetilde\beta$]
    \label{thm:normality}
    \spacingset{1.5}
    Suppose \cref{asu:causal-assumptions} hold and consider $\beta^\star$ defined in \eqref{eq:beta-star-def}. Suppose $\widehat\mu$ converges to some limit $\mu'$ (not necessarily the true $\mu^\star$) in $L_2$. Under regularity conditions, we have
    \begin{align}
        \sqrt{n}(\widehat\beta - \beta^\star) \dto N (0, V) \text{ as } n\to\infty, \label{eq:asymptotic-normality}
    \end{align}
    with $V := \EE\{ \partial_\beta \phi(\beta^\star, \mu') \}^{-1} \EE [ \{ \phi(\beta^\star, \mu') \phi(\beta^\star, \mu')^T \} ] \EE\{ \partial_\beta \phi(\beta^\star, \mu') \}^{-1,T}$. $V$ can be consistently estimated by $\PP_n\{ \partial_\beta \phi(\widehat\beta, \widehat\mu) \}^{-1} \PP_n [ \{ \phi(\widehat\beta, \widehat\mu) \phi(\widehat\beta, \widehat\mu)^T \} ] \PP_n\{ \partial_\beta \phi(\widehat\beta, \widehat\mu) \}^{-1,T}$.

    For the estimator $\widetilde\beta$ that uses cross-fitting, assume that $\widehat\mu_k$ converges to some limit $\mu'$ (not necessarily the true $\mu^\star$) in $L_2$ for each $k\in[K]$. Under regularity conditions, the asymptotic normality \eqref{eq:asymptotic-normality} holds with $\widehat\beta$ replaced by $\widetilde\beta$, in which case $V$ can be consistently estimated by 
    \begin{align*}
        \bigg[\frac{1}{K}\sum_{k=1}^K \PP_{n,k}\big\{ \partial_\beta \phi(\widehat\beta, \widehat\mu_k) \big\} \bigg]^{-1} \bigg[\frac{1}{K}\sum_{k=1}^K \PP_{n,k} \big\{ \phi(\widehat\beta, \widehat\mu_k) \phi(\widehat\beta, \widehat\mu_k)^T \big\} \bigg]
        \bigg[\frac{1}{K}\sum_{k=1}^K \PP_{n,k}\big\{ \partial_\beta \phi(\widehat\beta, \widehat\mu_k) \big\} \bigg]^{-1,T},
    \end{align*}
    where $\PP_{n,k}$ denotes the empirical average over observations from $B_k$.
\end{thm}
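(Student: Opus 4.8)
The argument is a standard $Z$-estimator expansion, made simple by two structural features of $\phi$: it is affine in $\beta$, and $\EE[\phi(\beta^\star,\mu)] = 0$ \emph{exactly} for every fixed nuisance value $\mu$ — a property stronger than Neyman orthogonality, stemming from the fact that the randomization probabilities $p_t(\cdot\mid H_t)$ are known. First I would establish this exact unbiasedness and use it to re-identify $\beta^\star$. Write $\psi_t(\mu) := I_t\,(-1)^{1-A_t} p_t(A_t\mid H_t)^{-1}\{Y - p_t(0\mid H_t)\mu_t(H_t,1) - p_t(1\mid H_t)\mu_t(H_t,0)\}$, so that $\phi(\beta,\mu) = \sum_t \omega(t)\{\psi_t(\mu) - f(t,S_t)^T\beta\} f(t,S_t)$. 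The key observation is that $\EE\{ I_t (-1)^{1-A_t} p_t(A_t\mid H_t)^{-1} g(H_t) \mid H_t\} = 0$ for any function $g$ of $H_t$ (on $\{I_t=1\}$ the $A_t=1$ and $A_t=0$ contributions cancel; on $\{I_t=0\}$ the factor vanishes), so the $\mu$-dependent part of $\psi_t$ has conditional mean zero given $H_t$ and $\EE[\phi(\beta^\star,\mu)] = \EE[\phi(\beta^\star,\mu^\star)]$ for all $\mu$. Plugging $\mu^\star_t(H_t,a) = \EE(Y\mid H_t,A_t=a)$ into $\phi$, invoking the identification formula \eqref{eq:identify-cee-ie}, and iterating the outer expectation over $S_t$ gives $\EE[\phi(\beta^\star,\mu^\star)] = \sum_t \omega(t)\,\EE[\{\tau(t,S_t) - f(t,S_t)^T\beta^\star\} f(t,S_t)]$, which is $0$ because it is precisely the normal equation characterizing the weighted least-squares projection \eqref{eq:beta-star-def}. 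Hence $\beta^\star$ is the unique root of $\beta\mapsto\EE[\phi(\beta,\mu')]$ whenever $M := \EE\{\sum_t \omega(t) f(t,S_t) f(t,S_t)^T\}$ is nonsingular, which I include among the regularity conditions together with square-integrability of $Y$ and of $f(t,S_t)$.

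Next I would write out the linear expansion. Since $\partial_\beta \phi(\beta,\mu) = -\sum_t \omega(t) f(t,S_t) f(t,S_t)^T$, depending on neither $\beta$ nor $\mu$, the equation $\PP_n\phi(\widehat\beta,\widehat\mu)=0$ has the closed form $\widehat\beta = M_n^{-1}\PP_n\{\sum_t \omega(t)\psi_t(\widehat\mu) f(t,S_t)\}$ with $M_n := \PP_n\{\sum_t \omega(t) f(t,S_t) f(t,S_t)^T\}$, hence
\[
    \sqrt{n}\,(\widehat\beta - \beta^\star) = M_n^{-1}\,\sqrt{n}\,\PP_n\phi(\beta^\star,\widehat\mu) .
\]
By the law of large numbers $M_n \to M$, which is invertible with probability tending to one. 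It then suffices to show $\sqrt{n}\,\PP_n\phi(\beta^\star,\widehat\mu) = \sqrt{n}\,\PP_n\phi(\beta^\star,\mu') + o_P(1)$ and to apply the multivariate CLT to the i.i.d.\ average $\sqrt{n}\,\PP_n\phi(\beta^\star,\mu')$, which has mean zero (by the previous paragraph) and finite covariance $\Sigma := \EE[\phi(\beta^\star,\mu')\phi(\beta^\star,\mu')^T]$. Slutsky's theorem then yields $\sqrt{n}(\widehat\beta - \beta^\star)\dto N(0, M^{-1}\Sigma M^{-1,T})$; since $\EE\{\partial_\beta\phi(\beta^\star,\mu')\} = -M$, the two sign changes cancel and $M^{-1}\Sigma M^{-1,T}$ is exactly the $V$ in the statement.

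The main obstacle is the nuisance-estimation term. I would decompose $\sqrt{n}\,\PP_n[\phi(\beta^\star,\widehat\mu) - \phi(\beta^\star,\mu')]$ into an empirical-process term $\sqrt{n}(\PP_n - P)[\phi(\beta^\star,\widehat\mu) - \phi(\beta^\star,\mu')]$ and a drift term $\sqrt{n}\,P[\phi(\beta^\star,\widehat\mu) - \phi(\beta^\star,\mu')]$. The drift term is \emph{identically zero}: by the first paragraph, $P\phi(\beta^\star,m) = 0$ for every fixed $m$, so no rate condition on $\widehat\mu$ is needed — this is exactly what makes the estimator robust to misspecification of the outcome model, and why a slowly converging or even inconsistent $\widehat\mu$ (with limit $\mu' \neq \mu^\star$) still yields valid inference, only with a larger $V$. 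For the empirical-process term, in the cross-fitted case I would condition on the out-of-fold sample so that $\widehat\mu_k$ is independent of the fold $B_k$; the centered average then has conditional mean zero and conditional variance at most $P\|\phi(\beta^\star,\widehat\mu_k) - \phi(\beta^\star,\mu')\|^2$, which tends to zero because $\phi(\beta^\star,\cdot)$ is Lipschitz in $\mu$ with respect to the $L_2$ norm (the propensities are bounded away from $0$ and $1$ by \cref{asu:positivity}) and $\widehat\mu_k \to \mu'$ in $L_2$; a conditional-Chebyshev argument (the standard DML lemma) then gives $o_P(1)$. Without cross-fitting, the same term is $o_P(1)$ provided $\widehat\mu$ lies, with probability tending to one, in a fixed $P$-Donsker class — a regularity condition — by asymptotic equicontinuity of the empirical process together with $L_2$ consistency. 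This Donsker restriction, which disappears under cross-fitting, is the one genuinely delicate ingredient; everything else is mechanical.

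Finally I would verify the variance estimator. The bread $\PP_n\{\partial_\beta\phi(\widehat\beta,\widehat\mu)\} = -M_n$ converges to $-M$ automatically, as it involves neither $\widehat\beta$ nor $\widehat\mu$. The meat $\PP_n[\phi(\widehat\beta,\widehat\mu)\phi(\widehat\beta,\widehat\mu)^T]$ converges to $\Sigma$ using $\widehat\beta - \beta^\star = o_P(1)$ (a byproduct of the expansion), $\widehat\mu\to\mu'$ in $L_2$, and a uniform law of large numbers under an envelope/moment condition; the continuous mapping theorem then gives consistency of the full sandwich for $V$. For the cross-fitted estimator $\widetilde\beta$, every display is repeated fold by fold with $\PP_n$ replaced by $K^{-1}\sum_{k=1}^K \PP_{n,k}$ and $\widehat\mu$ by $\widehat\mu_k$; the only substantive change is that the empirical-process step above now uses the independence of $\widehat\mu_k$ and $B_k$ rather than a Donsker condition, so the cross-fitted version requires weaker assumptions on the outcome-model estimator.
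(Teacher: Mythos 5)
Your proposal is correct, and its substantive core coincides with the paper's: both proofs hinge on the observation that $\EE\{ (-1)^{1-A_t} p_t(A_t\mid H_t)^{-1} \mid H_t, I_t=1\} = 0$ (the paper's Lemma \ref{A-lem:normality-proofuse}), which kills the $\mu$-dependent part of $\phi$ in expectation, and on the normal equation \eqref{A-eq:beta-star-score-equation} characterizing $\beta^\star$, which kills the rest; together these give the global robustness $\EE\{\phi(\beta^\star,\mu)\}=0$ for all $\mu$. Where you diverge is in what happens after that. The paper stops there and invokes Theorems 5.1 and 5.2 of \citet{cheng2023efficient} to convert global robustness into asymptotic normality and variance-estimator consistency; you instead carry out the $Z$-estimator expansion explicitly, exploiting the affineness of $\phi$ in $\beta$ to get the exact identity $\sqrt{n}(\widehat\beta-\beta^\star) = M_n^{-1}\sqrt{n}\,\PP_n\phi(\beta^\star,\widehat\mu)$, observing that the drift term vanishes identically (so no convergence rate on $\widehat\mu$ is needed), and disposing of the empirical-process remainder via a Donsker condition without cross-fitting or a conditional-Chebyshev argument with it. Your route is self-contained and makes transparent exactly which regularity conditions the paper's phrase ``under regularity conditions'' is hiding (nonsingularity of $M$, moment/envelope conditions, and the Donsker restriction that cross-fitting removes); the paper's route is shorter and defers that bookkeeping to the cited reference. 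One cosmetic difference: you establish the first term of $\EE\{\phi(\beta^\star,\mu)\}$ equals $\sum_t\omega(t)\EE\{\tau(t,S_t)f(t,S_t)\}$ by plugging in $\mu^\star$ and conditioning on $H_t$ (essentially re-deriving Lemma \ref{A-lem:est-fcn-post-proj-derivation-intermediate-step}), whereas the paper evaluates the IPW term directly via Lemma \ref{A-lem:est-fcn-post-proj-derivation-proofuse}(d) and \cref{thm:identification}; the two computations are equivalent. No gaps.
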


\begin{rmk}[Robustness]
    \label{rmk:estimator-robustness}
    \normalfont
    $\widehat\beta$ and $\widetilde\beta$ are robust in the sense that their consistency and asymptotical normality hold even with arbitrarily fitted nuisance parameter $\widehat\mu$, which need not converge to the true $\mu^\star$. Furthermore, the asymptotic variance $V$ involves only $\mu'$, the limit of $\widehat\mu$, and does not depend on how $\widehat\mu$ is fitted. This is because the estimating function $\phi(\beta,\mu)$ is globally robust \citep{cheng2023efficient}, as we prove in Supplementary Material \ref{A-sec:proof-normality}. Of course, a better $\widehat\mu$ improves the efficiency of $\widehat\beta$ and $\widetilde\beta$. Using nuisance parameters not required for identification to improve efficiency is common in causal inference \citep[e.g.,][]{tsiatis2008covariate,lok2024estimating}. \qed
\end{rmk}

\section{Simulation}
\label{sec:simulation}

\subsection{Generative Model and True Parameter Value} 
\label{subsec:simulation-dgm}

We construct a generative model that incorporates common complications in actual MRTs: endogenous time-varying covariates influenced by past treatments and affecting future treatments, effect moderation by time-varying covariates, interactions between current and past treatments that reflect user burden, and eligibility constraints.

In the generative model, each individual has $T = 30$ decision points, and data across individuals are i.i.d. For a generic individual, $X_t$ is a time-varying continuous covariate that depends on $A_{t-1}$ and $X_{t-1}$: $X_t = \theta_0 + \theta_1 A_{t-1} + \theta_2 X_{t-1} + \eta_t$, with $\eta_t \sim N(0,1)$ exogenous. $Z_t$ is a binary time-varying covariate that depends on $A_{t-1}$ and $Z_{t-1}$: $Z_t \sim \text{Bernoulli}\{\expit(\zeta_0 + \zeta_1 A_{t-1} + \zeta_2 Z_{t-1})\}$. We set $X_0 = Z_0 = A_0 = 0$. The eligibility indicator is exogenous: $I_t \sim \text{Bernoulli}(0.8)$. The randomization probability is $P(A_t = 1 \mid H_t, I_t = 1) = \expit\{(t-T/2) / T + Z_t - 0.5 + X_t / 6\}$, ensuring comparable influence of each variable on $A_t$. The outcome is
\begin{align*}
    Y = \sum_{t=1}^T \xi_t \{g(X_t/12 + 0.5) + Z_t\} + \sum_{t=1}^T A_t(\alpha_t + \nu_t X_t + \gamma_t Z_t + \lambda_t A_{t-1}) + \epsilon,
\end{align*}
where $g(\cdot)$ is the $\text{Beta}(2,2)$ density function (nonlinear), \revision{and $\epsilon \sim N(0,1)$ is exogeneous}. The observed data for an individual is $\{X_t,Z_t,I_t,A_t:1 \leq t \leq T\} \cup \{Y\}$, with $P(A_t = 1 \mid H_t, I_t = 1)$ is known from the trial design.

The parameter values are set as follows: $\theta_0 = -0.5$, $\theta_1 = \theta_2 = 0.5$; $\zeta_0 = -1$, $\zeta_1 = \zeta_2 = 1$; $\alpha_t = 1 + 2(t-1)/(T-1)$; $\nu_t = 1 + (t-1)/(T-1)$; $\gamma_t = 1 + 0.5(t-1)/(T-1)$; $\lambda_t = -1 - (t-1)/(T-1)$; $\xi_t = 1 + (t-1)/(T-1)$. The negative $\lambda_t$ reflects user burden (\cref{ex:treatment-burden}). The magnitudes of $\alpha_t, \nu_t, \gamma_t, \lambda_t, \xi_t$ all increase with $t$, characterizing a scenario where variables closer to the distal outcome $Y$ have a larger impact on $Y$.

We consider two sets of estimands and set $\omega(t) = 1/T$. The first estimand corresponds to the fully marginal DCEE by setting $S_t = \emptyset$ in \eqref{eq:def-cee} and $f(t,S_t) = 1$ in \eqref{eq:beta-star-def}, and the true parameter value is $\beta^\star_1 = 1.603$. The second set of estimands corresponds to the DCEE moderated by the binary covariate $Z_t$, i.e., through setting $S_t = Z_t$ and $f(t,S_t) = (1, Z_t)^T$. The true parameter values are $(\beta^\star_2, \beta^\star_3) = (1.207, 0.881)$. The numerical computation of $\beta^\star_1$ and $(\beta^\star_2, \beta^\star_3)$ is detailed in Supplementary Material \ref{A-sec:simulation-true-parameter}.


\subsection{Proposed Estimator and Comparator Methods} 
\label{subsec:simulation-estimators}

We consider the proposed estimators $\widehat\beta$ (without cross-fitting) and $\widetilde\beta$ (with $K=5$ fold cross-fitting). In both estimators, the working models for nuisance parameters $\mu_t(H_t,1)$ and $\mu_t(H_t,0)$ are pooled over $t \in [T]$ and fitted using either the generalized additive model (\texttt{gam} in R package \texttt{mgcv} \citep{wood2011fast}) \revision{or a fast implementation of random forests (R package \texttt{ranger} \citep{wright2017ranger})}. Each working model includes covariates $X_t$ (as a penalized cubic spline in \texttt{gam}) and untransformed $Z_t$. These working models are misspecified due to omitted history variables (because $\mu_t$ depends on the full history in addition to $X_t, Z_t$) and incorrect functional form (the dependence of $\mu_t$ on $X_t$ and $Z_t$ varies with $t$, which is not captured in the working model that pools over all $t\in[T]$). We introduce this misspecification to illustrate the robustness of the proposed estimator to model specificiation in $\widehat\mu$ (\cref{rmk:estimator-robustness}).


The DCEE is a novel estimand, with no existing method directly applicable. Nonetheless, we include as comparators three common approaches for analyzing longitudinal data with time-varying treatments, to illustrate the advantage of our method: the generalized estimating equations \citep[GEE, ][]{liang1986longitudinal}, the structural nested mean model \citep[SNMM, ][]{robins1994snmm}, \revision{and the weighted and centered least squares \citep[WCLS, ][]{boruvka2018}}. To use GEE for estimating the marginal DCEE $\beta^\star_1$, we specify the mean model for $Y$ as a linear combination of $A_t, X_t, Z_t$, pooling data over all $t\in[T]$; the coefficient for $A_t$ (same across all $t$) is the GEE estimator for $\beta^\star_1$. For the moderated DCEE $\beta^\star_2$ and $\beta^\star_3$, we specify the mean model for $Y$ as a linear combination of $A_t, X_t, Z_t, A_t \times Z_t$, again pooling over $t$. The coefficients for $A_t$ and $A_t \times Z_t$ are the GEE estimators for $\beta^\star_2$ and $\beta^\star_3$, respectively. We use a working independence correlation matrix and the robust standard error estimator.

For SNMM, we used the g-estimation implementation in the R package \texttt{DTRreg} \citep{wallace2017dtrreg}. To estimate the marginal DCEE $\beta^\star_1$, we specified a blip model with intercept only for each $t \in [T]$. The treatment assignment model included $X_t$ and $Z_t$ and correctly captured time dependence because \texttt{DTRreg} fits separate models for each $t$. The treatment-free outcome model was specified as a linear model depending on $X_t$ and $Z_t$, mimicking our nuisance model for $\mu_t$ to ensure comparability. For estimating the moderated DCEE parameters $\beta^\star_2$ and $\beta^\star_3$, we additionally included $Z_t$ in the blip model.

\revision{For WCLS, we used the \texttt{wcls} function in R package \texttt{MRTAnalysis}. We set the proximal outcome argument as the distal outcome (same value repeated for each decision point), and used as control variables a linear form of $X_t$ and $Z_t$. The moderator argument is set to intercept only (for estimating $\beta^\star_1$) or include $Z_t$ (for estimating $\beta^\star_2$ and $\beta^\star_3$). The output proximal CEE are treated as the estimated DCEE in the simulation.}

\subsection{Results} 
\label{subsec:simulation-model-specification}

We conducted simulations for sample sizes $n = 30, 50, 100, 200, 300, 400, 500$, each with 1000 replicates. Results are shown in \cref{fig:simulation-result}. The proposed DCEE estimators perform well even at $n=30$: for all estimands, bias is near zero and 95\% confidence interval coverage is close to nominal. There is no substantial difference between versions with and without cross-fitting. In contrast, the comparator methods GEE, SNMM, \revision{and WCLS} show substantial bias and poor coverage, because they were not designed for estimating the DCEE.

\begin{figure}[htbp]
    \centering
    \includegraphics[width = \textwidth]{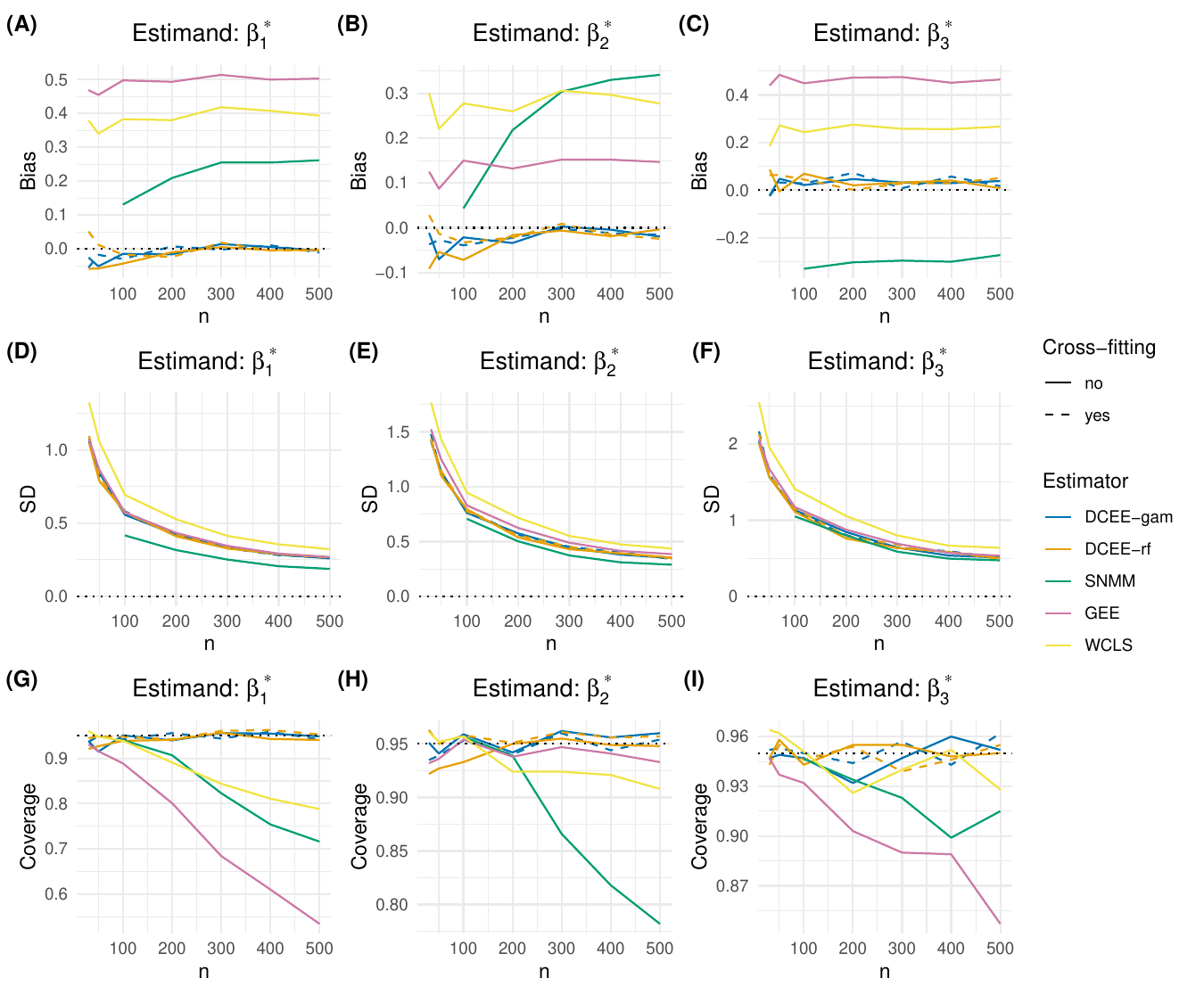}
    \caption{The numerical performance of DCEE (using GAM or random forests for nuisance parameter estimation, and with or without cross-fitting) GEE, SNMM, and WCLS for the three estimands in terms of bias (plots A--C), standard deviation (plots D--F), and the coverage of 95\% confidence intervals (plots G--I).}
    \label{fig:simulation-result}
\end{figure}

\revision{We observed very similar finite-sample performance between the cross-fitted and non-cross-fitted versions of the proposed estimator when the nuisance outcome model was estimated using either GAM or random forests. This may reflect that the nuisance functions in our generative setting were not overly complex and could be reliably learned by these algorithms. Given that our estimating equation is globally robust and does not rely on the nuisance model for identification, both approaches perform similarly in this context. A more comprehensive investigation of their relative merits is an important direction for future work. In practice, we recommend using cross-fitting when employing highly flexible learners in more complex settings or when overfitting bias is a concern \citep{zivich2021crossfit,naimi2021challenges}, but using the non-cross-fitted version if cross-fitting fails to converge (e.g., due to very small sample sizes).}

\section{Application}
\label{sec:application}

We use data from HeartSteps I, the first in a sequence of HeartSteps MRTs, designed to support sedentary adults in achieving and sustaining recommended physical activity levels \citep{klasnja2015microrandomized}. In this application, we focus on the micro-randomized activity suggestions. $n = 37$ participants were enrolled for 6 weeks, with decision points five times per day at pre-specified times, totaling $T = 210$ per person. At each decision point, a participant was eligible to be randomized if not driving or walking and if their phone had a stable internet connection. At every eligible decision point, a participant was randomized with probability 0.6 to receive an activity suggestion (a push notification suggesting brief walking or stretching) and probability 0.4 to receive no intervention. Across all participants, approximately 80\% of the decision points were eligible. Each participant wore a wristband tracker that continuously recorded step count.

The long-term goal of these activity suggestions was to help participants develop a habit of being physically active. Therefore, in this illustration we use the average daily step count in Week 6 as the distal outcome, serving as a proxy for habit formation. We analyze the effect of suggestions delivered in the first 5 weeks on this distal outcome. We discuss the distal outcome choice at the end of this section.

We set $\omega(t) = 1/T$. To fit nuisance parameters $\mu_t(H_t,1)$ and $\mu_t(H_t,0)$, we use the generalized additive model, pooling data over all $t \in [T]$, and include the step count in the prior 30 minutes and a location indicator (home/work $= 1$, elsewhere $= 0$).

We first examine the fully marginal excursion effect by setting $S_t = \emptyset$ and $f(t,s) = 1$ in \eqref{eq:beta-star-def}. The estimated effect is $\widehat\beta = 69$ (95\% CI $=[-107,245]$), which indicates that, averaged over Weeks 1--5, sending an activity suggestion has no detectable effect on Week 6 step count. We then assess the effect moderation by the decision point index $t$ (proportional to time in study), motivated by prior evidence of a diminishing effect over time on a short-term proximal outcome measured after each decision point \citep{klasnja2018efficacy}. We set $S_t = \emptyset$, $f(t,s) = (1,t-1)^T$, and $\beta = (\beta_0, \beta_1)^T$, where the $t-1$ ensures that $\beta_0$ reflects the effect at the first decision point. We find $\widehat\beta_0 = 344$ (95\% CI $=[-74,761]$) and $\widehat\beta_1 = -3.4$ (95\% CI $=[-7.5,0.7]$). This indicates that activity suggestions at the beginning of the study may have a sizeable effect (increasing Week 6 daily step count by 344 steps), but this effect likely deteriorates over time (due to the negative $\widehat\beta_1$). This is intriguing: early interventions, though more distant from Week 6, appear to have larger effects on habit formation than those delivered later.

To assess the linearity assumption [where we set $f(t,s) = (1,t-1)^T$], we conducted another moderation analysis with $f(t,s)$ containing an intercept and a B-spline basis (6 degrees of freedom), allowing $\tau(t,s)$ to vary flexibly with $t$. For interpretability, we constructed the B-spline basis on $\text{day}_t := \lfloor \frac{t-1}{5}\rfloor + 1$. \cref{fig:heartsteps-analysis} shows an increasing trend in the DCEE during the first week, peaking at the end of Week 1, then gradually decreasing and crossing 0 at around the end of Week 2. 

\begin{figure}[htbp]
    \centering
    \includegraphics[width = 0.7\textwidth]{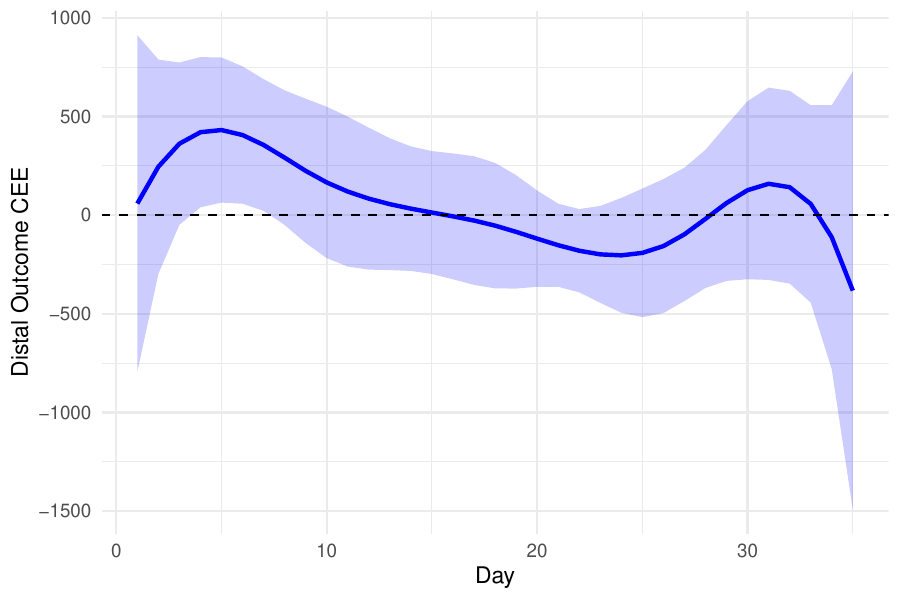}
    \caption{Time-varying effect of a push notification on the distal outcome (Week 6 average daily steps) across study days in the HeartSteps MRT. Solid curve: point estimate; shaded area: 95\% pointwise confidence intervals.}
    \label{fig:heartsteps-analysis}
\end{figure}

This finding suggests that early activity suggestions in the HeartSteps study may play a crucial role in shaping long-term behavior change, even if they are more distant from the distal outcome. One possible explanation is that early interventions help establish a momentum for behavior change, laying a foundation for participants to internalize and sustain new activity patterns. These early nudges may have a lasting impact by fostering psychological mechanisms such as self-efficacy, intrinsic motivation, or goal-setting.

In contrast, activity suggestions provided later in the study show diminishing returns, for two possible reasons that align with habit formation theory \citep{lally2013promoting}. First, later in the study some participants may have already developed a habit and thus rely less on prompts. Second, some participants may have disengaged due to message fatigue or perceived redundancy in the prompts. Our finding suggests that future mHealth programs should prioritize timely support early on to promote long-term habit formation.


Lastly, we note that the activity suggestions were also randomly delivered in Week 6, and thus the distal outcome was also influenced by those interventions. Our analysis remains valid despite of this dependence, but the results should be interpreted in the context of the Week 6 interventions. In other words, if micro-randomizations had stopped at the end of Week 5, the DCEE in that hypothetical scenario could be different due to potential interactions between Week 6 interventions and earlier interventions.

\section{Discussion}
\label{sec:discussion}

We introduced the distal causal excursion effect (DCEE), a novel causal estimand for evaluating the long-term impact of time-varying treatments in MRTs. We developed two estimators, both robust to outcome model misspecification. When applied to HeartSteps MRT, our found that early-in-the-study activity suggestions may have a stronger long-term impact on habit formation. Overall, this work provides the much-needed toolkit for scientists developing digital interventions to assess long-term causal effects using MRT data.

\revision{The causal excursion effect (CEE) and DCEE offer valuable but distinct insights for optimizing future digital interventions. CEE provides information on short-term responsiveness, while DCEE quantifies long-term impacts, and together they can inform outcome selection, intervention design, and iterative refinement. Moreover, they can help assess whether proximal outcomes serve as meaningful surrogates for distal goals and guide how short-term and long-term objectives might be balanced in both offline and real-time adaptive settings. We provide a detailed discussion of these practical and scientific considerations in Supplementary Material~\ref{A-sec:CEE_vs_DCEE}.}

Future work could extend our approach in several directions. First, our method assumes no unmeasured confounding and a correctly specified propensity score, and both assumptions are satisfied in MRTs but not necessarily in observational studies; incorporating doubly robust estimation and methods to address unmeasured confounding would improve applicability to observational studies. Second, the current linear model for the DCEE could be extended to data-adaptive models, especially if the set of moderators $S_t$ is large. Third, our approach estimates an average causal effect across all participants; future research could adapt it to estimate person-specific heterogeneous effects. Fourth, linking proximal and distal outcomes through mediation analysis could help clarify mechanisms driving long-term behavior change. Fifth, extending the DCEE to allow general reference policies and multiple-step excursions would further broaden the applicability of the method.

\revision{Finally, future work could explore policy learning approaches to identify optimal treatment sequences that maximize long-term benefit. However, the numerous decision points pose challenges, including an exponentially growing policy space, possible positivity violations, and instability of inverse probability weights. Solutions may build upon ideas from reinforcement learning and dynamic treatment regime literature, such as making structural assumptions (e.g., Markov or restricted history dependence), searching within parameterized policy classes (e.g., using policy gradient methods), and compressing history into lower-dimensional representations using domain knowledge.}

\section*{Acknowledgement}
\label{sec:acknowledgement}

The author thanks the anonymous referees, the Associate Editor, and the Editor for their conscientious efforts and constructive comments which improved the quality of this paper.

\section*{Supplementary Materials}
\label{sec:supp}

Supplementary Material \ref{A-sec:dcee-examples} contains the extensions and details for the DCEE examples in \cref{subsec:def-cee}. Supplementary Material \ref{A-sec:proof-identification} proves the identification result \cref{thm:identification}. Supplementary Material \ref{A-sec:beta-star-projection} proves the weighted average form of $\beta^\star$ given in \cref{rmk:beta-as-weighted-average}. Supplementary Material \ref{A-sec:est-fcn-post-proj-derivation} derives the estimating function $\phi(\beta,\mu)$ in \cref{eq:est-fcn-post-proj}. Supplementary Material \ref{A-sec:proof-normality} proves the asymptotic normality \cref{thm:normality}. Supplementary Material \ref{A-sec:simulation-true-parameter} details the numerical computation of true parameter values in simulation studies. Supplementary Material \ref{A-sec:CEE_vs_DCEE} discusses practical considerations for using CEE and DCEE to inform intervention design.

\section*{Data Availability}

The HeartSteps I MRT data that support the findings of this paper are publicly available at \url{https://github.com/klasnja/HeartStepsV1}. The code to reproduce the simulation and data analysis results can are available at \url{https://github.com/tqian/paper_DCEE/tree/v1.0}. The proposed estimator for distal causal excursion effect (DCEE) is implemented in the \texttt{dcee()} function in R package \texttt{MRTAnalysis} \citep{RpackageMRTAnalysis}.

\bibliographystyle{agsm}

\bibliography{main-and-appendix-distal-outcome-MRT}

\newpage


\begin{appendices}

\spacingset{1.9} 

\section{Extensions and Details for the DCEE Examples in \texorpdfstring{\cref{subsec:def-cee}}{Section 2.2}}
\label{A-sec:dcee-examples}

We provide the detailed derivation for the four DCEE Examples in \cref{subsec:def-cee}, where because \cref{asu:causal-assumptions} holds in all these examples, we can invoke \cref{thm:identification} and in particular \cref{eq:identify-cee-ie2} to calculate $\tau(t)$. We also present more general versions of \cref{ex:direct-indirect-effects} and \cref{ex:eligibility} to illustrate additional nuanced effects that are captured by the DCEE.

\subsection{\texorpdfstring{\cref{ex:marginalize-over-effect-modifiers}}{Example 1}: DCEE marginalizes over effect modifiers not in \texorpdfstring{$S_t$}{St}}

Consider the data-generating process described in \cref{ex:marginalize-over-effect-modifiers}. Below we show that $\tau(t) = \alpha_t + \beta_t ~\EE(X_t)$.

Given the exogeneity of $X_t$ and $A_t$, and that $I_t \equiv 1$, we have
\begin{align*}
    \EE(Y \mid H_t, I_t = 1, A_t = 1) - \EE(Y \mid H_t, I_t = 1, A_t = 0) = \alpha_t + \beta_t X_t,
\end{align*}
and the form of $\tau(t)$ follows immediately from \cref{eq:identify-cee-ie2}.

\subsection{\texorpdfstring{\cref{ex:treatment-burden}}{Example 2}: DCEE captures user burden}

Consider the data-generating process described in \cref{ex:treatment-burden}. Below we show that $\tau(s) = \beta_s - (\alpha_{s-1} + \alpha_s) p$ with $\alpha_0 := 0$, and the form of $\frac{1}{T}\sum_{t=1}^T \tau(t)$ in \cref{ex:treatment-burden} follows immediately.

Given the exogeneity of $X_t$ and $A_t$, and that $I_t \equiv 1$, we have
\begin{align*}
    & ~~~~ \EE(Y \mid H_s, A_s) \\
    & = \EE\{g(\bX_T) \mid H_s, A_s\} + \sum_{t=1}^T \beta_t \EE(A_t \mid H_s, A_s) - \sum_{t=1}^{T-1} \alpha_t \EE(A_t A_{t+1} \mid H_s, A_s) \\
    & = \EE\{g(\bX_T) \mid H_s\} + \sum_{t=1}^{s-1} \beta_t A_t + \beta_s A_s + \sum_{t=s+1}^{T} \beta_t \EE(A_t \mid H_s) \\
    & ~~~~ - \sum_{t=1}^{s-2} \alpha_t A_t A_{t+1} - \alpha_{s-1} A_{s-1} A_s - \alpha_s A_s \EE(A_{s+1} \mid H_s) - \sum_{t = s+1}^T \alpha_t \EE(A_t A_{t+1} \mid H_s).
\end{align*}
Therefore,
\begin{align*}
    \EE(Y \mid H_s, A_s = 1) - \EE(Y \mid H_s, A_s = 0) = \beta_s - \alpha_{s-1} A_{s-1} - \alpha_s p.
\end{align*}
Plugging this into \cref{eq:identify-cee-ie2} and we have
\begin{align*}
    \tau(s) = \EE(\beta_s - \alpha_{s-1} A_{s-1} - \alpha_s p) = \beta_s - (\alpha_{s-1} + \alpha_s) p.
\end{align*}
Averaging over $s \in [T]$ and we have the form of $\frac{1}{T}\sum_{t=1}^T \tau(t)$ in \cref{ex:treatment-burden}.

\subsection{\texorpdfstring{\cref{ex:direct-indirect-effects}}{Example 3} (extension): DCEE combines direct and indirect effects}

We present a more general version of \cref{ex:direct-indirect-effects}. Consider a two-timepoint setting ($T=2$), where $X_1$ is exogeneous, all individuals are always eligible ($I_1 = I_2 \equiv 1$), treatments $A_1, A_2 \sim \bern(p)$ are exogeneous, and the treatment $A_1$ influences the time-varying covariate $X_2$ through $X_2 \mid A_1 \sim \bern(\rho_0 + \rho_1 A_1)$. Generalizing \cref{ex:direct-indirect-effects}, we assume that the outcome is generated from $Y = \gamma_0 + \gamma_1 X_1 + \gamma_2 X_2 + \beta_1 A_1 + \beta_2 A_2 + \alpha_1 X_1 A_1 + \alpha_2 X_2 A_2 + \epsilon$.

We derive $\tau(1)$ below. We have
\begin{align*}
    \EE(Y \mid X_1, A_1) & = \gamma_0 +\gamma_1 X_1 + \gamma_2 \EE(X_2 \mid A_1) + \beta_1 A_1 + \beta_2 \EE(A_2) + \alpha_1 X_1 A_1 + \alpha_2 \EE(X_2 \mid A_1) \EE(A_2) \\
    & = \gamma_0 +\gamma_1 X_1 + (\gamma_2 + \alpha_2 p)(\rho_0 + \rho_1 A_1) + \beta_1 A_1 + \beta_2 p + \alpha_1 X_1 A_1.
\end{align*}
Therefore, 
\begin{align*}
    \EE(Y \mid X_1, A_1 = 1) - \EE(Y \mid X_1, A_1 = 0) = (\gamma_2 + \alpha_2 p) \rho_1 + \beta_1 + \alpha_1 X_1.
\end{align*}
Plugging this into \cref{eq:identify-cee-ie2} and we have
\begin{align*}
    \tau(1) = \beta_1 + \alpha_1 \EE(X_1) + (\gamma_2 + \alpha_2 p) \rho_1.
\end{align*}

We interpret each term in $\tau(1)$:
\begin{itemize}
    \item $\beta_1$: the direct of $A_1$ on $Y$ not moderated by $X_1$;
    \item $\alpha_1 \EE(X_1)$: the direct of $A_1$ on $Y$ moderated by $X_1$;
    \item $\gamma_2 \rho_1$: the indirect of $A_1$ on $Y$ mediated by $X_2$ that does not interact with $A_2$;
    \item $\alpha_2 p \rho_1$: the indirect of $A_1$ on $Y$ mediated by $X_2$ that interacts with $A_2$.
\end{itemize}

\subsection{\texorpdfstring{\cref{ex:eligibility}}{Example 4} (extension): DCEE captures treatment effects on future eligibility}

We present a more general version of \cref{ex:eligibility}. Consider a two-timepoint setting ($T=2$), where all individuals are eligible at $t=1$ ($I_1 \equiv 1$), treatment at $t=1$ influences future eligibility: $I_2 \mid A_1 \sim \bern(\rho_0 - \rho_1 A_1)$, and the treatment assignments follow $A_1 \sim \bern(p)$ and $A_2 \mid I_2 = 1 \sim \bern(p)$. For simplicity assume that there are no other covariates besides $I_t$. Generalizing \cref{ex:eligibility}, we assume that the outcome is generated from $Y = \beta_0 + \beta_1 A_1 + \beta_2 A_2 - \alpha A_1 A_2 + \epsilon$. Assume $\beta_1, \beta_2, \alpha, \rho_0, \rho_1 > 0$.

We derive $\tau(1)$ below. We have
\begin{align*}
    \EE(Y \mid A_1) & = \beta_0 + \beta_1 A_t + (\beta_2 - \alpha A_1) \EE(A_2 \mid A_1).
\end{align*}
For $\EE(A_2 \mid A_1)$ we have
\begin{align*}
    \EE(A_2 \mid A_1) = \EE \{ \EE(A_2 \mid I_2, A_1) \mid A_1\} = \EE(I_2 p \mid A_1) = p(\rho_0 - \rho_1 A_1).
\end{align*}
Plugging into $\EE(Y \mid A_1)$ and we have
\begin{align*}
    \EE(Y \mid A_1) = \beta_0 + \beta_1 A_t + (\beta_2 - \alpha A_1) p(\rho_0 - \rho_1 A_1).
\end{align*}
Therefore, 
\begin{align*}
    & ~~~~ \EE(Y \mid A_1 = 1) - \EE(Y \mid A_1 = 0) \\ 
    & = \beta_1 + (\beta_2 - \alpha) p(\rho_0 - \rho_1) - \beta_2 p\rho_0 \\
    & = \beta_1 - p \rho_0 \alpha + p \rho_1 \alpha - p \rho_1 \beta_2.
\end{align*}
Plugging this into \cref{eq:identify-cee-ie2} and we have
\begin{align*}
    \tau(1) = \beta_1 - p \rho_0 \alpha + p \rho_1 \alpha - p \rho_1 \beta_2.
\end{align*}

We interpret each term in $\tau(1)$:
\begin{itemize}
    \item $\beta_1$: the direct of $A_1$ on $Y$;
    \item $-p \rho_0 \alpha$: the negative interaction between $A_1$ and $A_2$ not accounting for the influence of $A_1$ on $I_2$;
    \item $p \rho_1 \alpha$: the reduced negative interaction of $A_1 \times A_2$ due to $A_1$ negatively impacting future eligibility $I_2$;
    \item $- p \rho_1 \beta_2$: the reduced effectiveness of $A_2$ on $Y$ due to $A_1$ negatively impacting future eligiblity $I_2$.
\end{itemize}

\section{Proof of Identification Result (\texorpdfstring{\cref{thm:identification}}{Theorem 1})}
\label{A-sec:proof-identification}

We first state and prove a useful lemma.

\begin{lem}
    \label{A-lem:identification-proofuse}
    We have
    \begin{itemize}
        \item[(a)] $p_t^a\{d_t^a(H_t) \mid H_t \} = 1$.
        \item[(b)] $p_t^a\{1 - d_t^a(H_t) \mid H_t \} = 0$.
    \end{itemize}
\end{lem}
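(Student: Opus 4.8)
The plan is to observe that the lemma is an immediate consequence of two facts: $d_t^a$ is a \emph{deterministic} decision rule, and it takes values in $\{0,1\}$. Recall that $p_t^a(\cdot\mid H_t)$ denotes the conditional law of the action at decision point $t$ under the rule $d_t^a$, i.e.\ $p_t^a(b\mid H_t) = P\{d_t^a(H_t) = b \mid H_t\}$ for $b\in\{0,1\}$. By definition $d_t^1(H_t) = I_t$ and $d_t^0(H_t) = 0$; since the eligibility indicator $I_t$ is a component of $H_t$, in both cases $d_t^a(H_t)$ is a fixed (non-random) element of $\{0,1\}$ once $H_t$ is given. Hence $p_t^a(\cdot\mid H_t)$ is the point mass at $d_t^a(H_t)$, that is, $p_t^a(b\mid H_t) = \indic\{b = d_t^a(H_t)\}$.

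For part (a), substituting $b = d_t^a(H_t)$ yields $p_t^a\{d_t^a(H_t)\mid H_t\} = \indic\{d_t^a(H_t) = d_t^a(H_t)\} = 1$. For part (b), substituting $b = 1 - d_t^a(H_t)$ yields $p_t^a\{1 - d_t^a(H_t)\mid H_t\} = \indic\{1 - d_t^a(H_t) = d_t^a(H_t)\}$; since $d_t^a(H_t)\in\{0,1\}$ we have $1 - d_t^a(H_t)\neq d_t^a(H_t)$, so this indicator equals $0$.

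There is essentially no obstacle here: the only points requiring (minor) care are that $d_t^a$ is a function of $H_t$ alone—in particular the eligibility indicator entering $d_t^1 = I_t$ belongs to $H_t$, so $d_t^1$ really is deterministic given $H_t$—and that the range of $d_t^a$ is $\{0,1\}$, which is what makes $1 - d_t^a(H_t)$ distinct from $d_t^a(H_t)$. The lemma will then be invoked in the proof of \cref{thm:identification} to pass between the inverse-probability-weighted and iterated-expectation representations, by writing the potential outcome under $D_{d_t = d_t^a}$ in terms of weights $\indic\{A_t = d_t^a(H_t)\}$ and simplifying the resulting probabilities using parts (a) and (b).
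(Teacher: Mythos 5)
Your proof is correct and follows essentially the same route as the paper: both arguments reduce to the fact that $d_t^1(H_t)=I_t$ and $d_t^0(H_t)=0$ are deterministic given $H_t$, so $p_t^a(\cdot\mid H_t)$ is the point mass at $d_t^a(H_t)$. The only cosmetic difference is that the paper verifies (a) by cases on $a$ and then obtains (b) from (a) via $p_t^a\{d_t^a(H_t)\mid H_t\}+p_t^a\{1-d_t^a(H_t)\mid H_t\}=1$, whereas you read both parts directly off the indicator form $p_t^a(b\mid H_t)=\indic\{b=d_t^a(H_t)\}$.
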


\begin{proof}[Proof of \cref{A-lem:identification-proofuse}]
    When $a=1$, by definition we have $p_t^1(1\mid H_t) = I_t = \indic(I_t = 1)$ and $p_t^1(0\mid H_t) = 1 - I_t = \indic(I_t = 0)$. This implies that $p_t^1(a\mid H_t) = \indic(I_t = a)$. We also have $d_t^1(H_t) = I_t$. Therefore, $p_t^1\{d_t^1(H_t) \mid H_t\} = p_t^1(I_t \mid H_t) = 1$.

    When $a = 0$, by definition we have $d_t^0(H_t) = 0$ and $p_t^0\{d_t^0(H_t) \mid H_t\} = p_t^0(0 \mid H_t) = 0$. This proves statement (a).

    Statement (b) follows immediately from statement (a) and the fact that
    \begin{align*}
        p_t^a\{d_t^a(H_t) \mid H_t\} + p_t^a\{1 - d_t^a(H_t) \mid H_t\} = 1.
    \end{align*}

    This completes the proof.
\end{proof}

\revision{We now state and prove an equivalent version of \cref{thm:identification}, using slightly different notation to facilitate the proof. Let $p^1_t(a \mid H_t)$ denote the probability mass function of $A_t$ conditional on $H_t$ under the decision rule $d_t^1$: $p^1_t(a \mid H_t) = P_{A_t \sim d_t^1(H_t)}(A_t = a \mid H_t)$, i.e., $p^1_t(1 \mid H_t) = I_t$ and $p^1_t(0 \mid H_t) = 1 - I_t$. Similarly, let $p^0_t(a \mid H_t)$ denote the distribution of $A_t$ under $d_t^0$, i.e., $p^0_t(1 \mid H_t) = 0$ and $p^0_t(0 \mid H_t) = 1$. The following \cref{A-thm:identification} is equivalent to \cref{thm:identification}: one can directly verify that $\dfrac{p_t^1(A_t \mid H_t)}{p_t(A_t \mid H_t)} = \dfrac{\indic(A_t = I_t)}{P(A_t = I_t \mid H_t)}$ and $\dfrac{p_t^0(A_t \mid H_t)}{p_t(A_t \mid H_t)} = \dfrac{\indic(A_t = 0)}{P(A_t = 0 \mid H_t)}$ by checking equivalence under each scenario of $(A_t,I_t)\in\{0,1\}^{\otimes 2}$.}

\begin{thm}
    \label{A-thm:identification}
    Under \cref{asu:causal-assumptions}, we have
    \begin{align}
        \tau(t,s) 
        & = \EE \bigg\{ \frac{p_t^1(A_t \mid H_t)}{p_t(A_t \mid H_t)} Y - \frac{p_t^0(A_t \mid H_t)}{p_t(A_t \mid H_t)} Y ~\bigg|~ S_t = s \bigg\}. \label{A-eq:identify-cee-ipw} \\
        & = \EE \Big[ \EE\big\{Y \mid H_t, A_t = d_t^1(H_t)\big\} - \EE\big\{Y \mid H_t, A_t = d_t^0(H_t)\big\} \mid S_t = s\Big] \label{A-eq:identify-cee-ie} \\
        & = \EE(I_t) \EE \Big\{ \EE(Y \mid H_t, I_t = 1, A_t = 1) - \EE(Y \mid H_t, I_t = 1, A_t = 0) \mid S_t = s, I_t = 1 \Big\}. \label{A-eq:identify-cee-ie2}        
    \end{align}
\end{thm}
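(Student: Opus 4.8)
The plan is to establish the inverse-probability-weighted identity \eqref{A-eq:identify-cee-ipw} first and then obtain \eqref{A-eq:identify-cee-ie} and \eqref{A-eq:identify-cee-ie2} from it by elementary manipulations. The engine for \eqref{A-eq:identify-cee-ipw} is the standard change-of-measure (g-formula) representation of the distribution induced by a stochastic policy: under \cref{asu:causal-assumptions}, for any policy $\pi = (\pi_1,\dots,\pi_T)$ built from stochastic decision rules, the distribution of $(H_t^\pi, Y(\pi))$ --- where $H_t^\pi$ is the history generated under $\pi$ --- coincides with the distribution of $(H_t, Y)$ under $P_0$ reweighted by the likelihood ratio $\prod_{\ell=1}^{T} \pi_\ell(A_\ell \mid H_\ell) / p_\ell(A_\ell \mid H_\ell)$. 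I would prove this by backward induction over $\ell = T, T-1, \dots, 1$, peeling off one decision point at a time: sequential ignorability inserts the conditioning on $A_\ell$, consistency (SUTVA) replaces potential outcomes by observed variables, and averaging $A_\ell$ against $\pi_\ell(\cdot \mid H_\ell)$ produces the $\ell$-th weight factor.

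The key simplification specific to this setting is that both excursion policies $D_{d_t = d_t^1}$ and $D_{d_t = d_t^0}$ agree with the MRT policy $D$ at every decision point $\ell \ne t$, and the MRT rule $d_\ell$ draws $A_\ell$ from exactly the randomization law $p_\ell(\cdot \mid H_\ell)$; hence the $\ell$-th weight factor is identically $1$ for all $\ell \ne t$, and only the $t$-th factor $p_t^a(A_t \mid H_t)/p_t(A_t \mid H_t)$ survives. Moreover, since the excursion changes only $A_t, A_{t+1}, \dots$, the law of the pre-$t$ history, hence of $S_t(\bd_{t-1})$, coincides with that of the observed $S_t$, so conditioning on $S_t(\bd_{t-1}) = s$ is conditioning on $S_t = s$; dividing the reweighted numerator by the unchanged normalizer $P(S_t = s)$ yields \eqref{A-eq:identify-cee-ipw}. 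One point to check: \cref{asu:positivity} is stated only on $\{I_t = 1\}$, but that suffices for the likelihood ratio to be well defined, because on $\{I_t = 0\}$ both $d_t^1$ and $d_t^0$ put unit mass on $A_t = 0$, matching $p_t(0 \mid H_t) = 1$.

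From \eqref{A-eq:identify-cee-ipw} I would derive \eqref{A-eq:identify-cee-ie} by iterated expectations: the inner conditional expectation given $H_t$ turns $\EE\{ p_t^a(A_t \mid H_t)\, p_t(A_t \mid H_t)^{-1} Y \mid H_t \}$ into $\sum_{a' \in \{0,1\}} p_t^a(a' \mid H_t)\, \EE(Y \mid H_t, A_t = a')$, which \cref{A-lem:identification-proofuse} collapses to $\EE\{Y \mid H_t, A_t = d_t^a(H_t)\}$. For \eqref{A-eq:identify-cee-ie2}, substitute $d_t^1(H_t) = I_t$ and $d_t^0(H_t) = 0$ into \eqref{A-eq:identify-cee-ie}: on $\{I_t = 0\}$ the contrast $\EE(Y \mid H_t, A_t = I_t) - \EE(Y \mid H_t, A_t = 0)$ vanishes, so $\tau(t,s)$ reduces to $\EE[\,\indic(I_t = 1)\{\EE(Y \mid H_t, I_t = 1, A_t = 1) - \EE(Y \mid H_t, I_t = 1, A_t = 0)\} \mid S_t = s\,]$, and re-expressing this as a product of the eligibility probability and the eligible-subgroup conditional blip gives \eqref{A-eq:identify-cee-ie2}. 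This last re-expression is the step that most needs care, since one must track precisely how the eligibility indicator sits relative to the conditioning set $S_t$.

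I expect the real work to lie in the first ingredient --- making the change-of-measure representation fully rigorous for stochastic policies that must honor the deterministic constraint $I_t = 0 \Rightarrow A_t = 0$ --- rather than in the later algebra. Two features deserve attention: the compared policies deviate from $D$ over the whole remaining horizon $t,\dots,T$ rather than locally (unlike the proximal causal excursion effect \citep{boruvka2018}), so the induction must propagate sequential ignorability and consistency across all those decision points; and the rules $d_t^1, d_t^0$ carry a degenerate-randomization branch on $\{I_t = 0\}$ that the induction has to accommodate without invoking positivity where it fails. Once the representation is established and the MRT-randomization weights are seen to collapse off decision point $t$, the remainder is bookkeeping with \cref{A-lem:identification-proofuse} and the law of total expectation.
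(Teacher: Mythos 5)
Your proposal is correct and arrives at the same identification chain as the paper; the difference is in how the first link is established. The paper does not prove a general change-of-measure representation: it starts directly from the IPW functional $\EE\{p_t^a(A_t\mid H_t)\,p_t(A_t\mid H_t)^{-1}Y\mid S_t\}$, conditions on $H_t$ so that \cref{A-lem:identification-proofuse} collapses the weight and yields $\EE[\EE\{Y\mid H_t,A_t=d_t^a(H_t)\}\mid S_t]$ (so \eqref{A-eq:identify-cee-ie} is its \emph{intermediate} step rather than a downstream corollary), and then runs one further chain of iterated expectations over $\bA_{t+1:T}$, invoking consistency and sequential ignorability twice each, to reach $\EE\{Y(\bd_{t-1},d_t^a,\bd_{t+1:T})\mid S_t(\bd_{t-1})\}$. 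Your backward-induction likelihood-ratio lemma for arbitrary stochastic policies proves strictly more than is needed and then specializes by noting that every factor off decision point $t$ equals one; the paper's computation never introduces those factors at all, because the post-$t$ decision rules of the excursion policy coincide with the data-generating randomization and are absorbed in a single application of the tower property. Both arguments deploy \cref{asu:causal-assumptions} in identical roles, so your extra generality buys reusability (e.g., for the multi-step excursions mentioned in \cref{sec:discussion}) at the cost of more bookkeeping, and your handling of positivity on $\{I_t=0\}$ matches the paper's implicit treatment. One step you rightly flag as delicate: passing from $\EE[\indic(I_t=1)\{\cdots\}\mid S_t=s]$ to \eqref{A-eq:identify-cee-ie2} produces the factor $P(I_t=1\mid S_t=s)$, whereas the displayed theorem has the unconditional $\EE(I_t)$; these coincide when $S_t=\emptyset$ or when $I_t$ is independent of $S_t$, and the paper's own proof writes $P(I_t=1)$ where the law of total expectation gives $P(I_t=1\mid S_t)$ --- so your caution there points at a genuine imprecision in the target statement rather than a defect of your argument.
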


\begin{proof}[Proof of \cref{A-thm:identification}]
    Repeatedly using the law of iterated expectations, we have
    \begin{align}
        & ~~~ \EE \bigg\{ \frac{p_t^a(A_t \mid H_t)}{p_t(A_t \mid H_t)} Y ~\bigg|~ S_t \bigg\} \label{A-eq:identification-proofuse0} \\
        & = \EE \bigg[ \EE \bigg\{ \frac{p_t^a(A_t \mid H_t)}{p_t(A_t \mid H_t)} Y ~\bigg|~ H_t \bigg\} ~\bigg|~ S_t \bigg] \nonumber \\
        & = \EE \bigg[ \EE \bigg\{ \frac{\overbrace{p_t^a(A_t \mid H_t)}^{= 1 \text{ due to \cref{A-lem:identification-proofuse}(a)}}}{p_t(A_t \mid H_t)} Y ~\bigg|~ H_t, A_t = d_t^a(H_t) \bigg\} ~p_t\{d_t^a(H_t) \mid H_t \} ~\bigg|~ S_t \bigg] \nonumber \\
        & ~~~ + \EE \bigg[ \EE \bigg\{ \frac{\overbrace{p_t^a(A_t \mid H_t)}^{= 0 \text{ due to \cref{A-lem:identification-proofuse}(b)}}}{p_t(A_t \mid H_t)} Y ~\bigg|~ H_t, A_t = 1 - d_t^a(H_t) \bigg\} ~p_t\{1 - d_t^a(H_t) \mid H_t \} ~\bigg|~ S_t \bigg] \nonumber \\
        & = \EE [ \EE\{ Y \mid H_t, A_t = d_t^a(H_t)\} \mid S_t ] \label{A-eq:identification-proofuse1} \\
        & = \EE ( \EE [ \EE\{ Y \mid H_t, A_t = d_t^a(H_t), \bA_{t+1:T}^{a_t = d_t^a(H_t)}\} \mid H_t, A_t = d_t^a(H_t) ] \mid S_t ) \nonumber \\
        & = \EE ( \EE [ \EE\{ Y (\bd_{t-1}, d_t^a, \bd_{t+1:T}) \mid H_t, A_t = d_t^a(H_t), \bA_{t+1:T}^{a_t = d_t^a(H_t)}\} \mid H_t, A_t = d_t^a(H_t) ] \mid S_t )\label{A-eq:identification-proofuse2} \\
        & = \EE ( \EE [ \EE\{ Y (\bd_{t-1}, d_t^a, \bd_{t+1:T}) \mid H_t, A_t = d_t^a(H_t)\} \mid H_t, A_t = d_t^a(H_t) ] \mid S_t ) \label{A-eq:identification-proofuse3} \\
        & = \EE [ \EE\{ Y (\bd_{t-1}, d_t^a, \bd_{t+1:T}) \mid H_t, A_t = d_t^a(H_t)\} \mid S_t ] \nonumber \\
        & = \EE [ \EE\{ Y (\bd_{t-1}, d_t^a, \bd_{t+1:T}) \mid H_t\} \mid S_t ] \label{A-eq:identification-proofuse4} \\
        & = \EE\{ Y (\bd_{t-1}, d_t^a, \bd_{t+1:T}) \mid S_t(\bd_{t-1}) \}. \label{A-eq:identification-proofuse5}
    \end{align}
    Here, \cref{A-eq:identification-proofuse1} follows from \cref{A-lem:identification-proofuse}, \cref{A-eq:identification-proofuse2} follows from \cref{asu:consistency}, \cref{A-eq:identification-proofuse3} follows from \cref{asu:sequential-ignorability}, \cref{A-eq:identification-proofuse4} follows from \cref{asu:sequential-ignorability}, \cref{A-eq:identification-proofuse5} follows from \cref{asu:consistency}.

    The fact that \cref{A-eq:identification-proofuse0,A-eq:identification-proofuse1,A-eq:identification-proofuse5} are equal to each other establishes \cref{A-eq:identify-cee-ipw,A-eq:identify-cee-ie}.

    Finally, to prove \cref{A-eq:identify-cee-ie2}, we have 
    \begin{align}
        & ~~~ \EE [ \EE\{ Y \mid H_t, A_t = d_t^1(H_t)\} - \EE\{ Y \mid H_t, A_t = d_t^0(H_t)\} \mid S_t ] \nonumber \\
        & = \EE [ \EE\{ Y \mid H_t, A_t = d_t^1(H_t)\} - \EE\{ Y \mid H_t, A_t = d_t^0(H_t)\} \mid S_t, I_t = 1 ] P(I_t = 1) \nonumber \\
        & ~~~ + \EE [ \EE\{ Y \mid H_t, A_t = d_t^1(H_t)\} - \EE\{ Y \mid H_t, A_t = d_t^0(H_t)\} \mid S_t, I_t = 0 ] P(I_t = 0) \nonumber \\
        & = \EE \{ \EE( Y \mid H_t, I_t = 1, A_t = 1) - \EE( Y \mid H_t, I_t = 1, A_t = 0) \mid S_t, I_t = 1 \} P(I_t = 1) + 0, \nonumber
    \end{align}
    and this establishes \cref{A-eq:identify-cee-ie2}.
            
    The proof is thus completed.
\end{proof}

\section{The Weighted Average Form of \texorpdfstring{$\beta^\star$}{beta star}}
\label{A-sec:beta-star-projection}

We state and prove a general theorem for the weighted average form of $\beta^\star$ defined in \cref{eq:beta-star-def}, which immediately implies \cref{eq:beta-star-projection-1,eq:beta-star-projection-2}.

\begin{thm}
    \label{A-thm:beta-star-projection-general}
    Under \cref{asu:causal-assumptions}, for the $\beta^\star$ defined in \cref{eq:beta-star-def} we have
    \revision{\begin{align}
        \beta^\star = & \bigg[\sum_{t=1}^T \omega(t) \EE \{f(t,S_t)f(t,S_t)^T\} \bigg]^{-1} \nonumber\\
        & \times \sum_{t=1}^T \omega(t) \EE(I_t) \EE \Big[ \big\{ \EE(Y | H_t, I_t = 1, A_t = 1) - \EE(Y | H_t, I_t = 1, A_t = 0) \mid S_t, I_t = 1 \big\} f(t,S_t) \Big]. \label{A-eq:beta-star-projection-general}
    \end{align}}
\end{thm}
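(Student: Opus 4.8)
The plan is to reduce the weighted least-squares problem defining $\beta^\star$ to a population normal-equation computation, then substitute the identification formula for $\tau(t,s)$ to get the desired weighted-average form. First I would write out the first-order optimality condition for the minimization in \eqref{eq:beta-star-def}: differentiating $\sum_t \omega(t)\EE[\{\tau(t,S_t) - f(t,S_t)^T\beta\}^2]$ with respect to $\beta$ and setting it to zero gives
\begin{align*}
    \sum_{t=1}^T \omega(t)\,\EE\big[\{\tau(t,S_t) - f(t,S_t)^T\beta^\star\} f(t,S_t)\big] = 0,
\end{align*}
so that
\begin{align*}
    \beta^\star = \bigg[\sum_{t=1}^T \omega(t)\,\EE\{f(t,S_t)f(t,S_t)^T\}\bigg]^{-1} \sum_{t=1}^T \omega(t)\,\EE\{\tau(t,S_t) f(t,S_t)\},
\end{align*}
assuming the Gram matrix $\sum_t \omega(t)\EE\{f(t,S_t)f(t,S_t)^T\}$ is invertible (a standing regularity condition). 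This is the standard best-linear-projection identity and requires only that the relevant second moments are finite.

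Next I would plug in the identification formula for $\tau(t,S_t)$. By \cref{thm:identification}, specifically \eqref{eq:identify-cee-ie2} with $s = S_t$, we have
\begin{align*}
    \tau(t,S_t) = \EE(I_t)\,\EE\big\{\EE(Y\mid H_t,I_t=1,A_t=1) - \EE(Y\mid H_t,I_t=1,A_t=0) \mid S_t, I_t=1\big\}.
\end{align*}
Substituting this into $\EE\{\tau(t,S_t) f(t,S_t)\}$ and using the tower property — conditioning on $(S_t, I_t=1)$ inside and then taking the outer expectation, noting that $f(t,S_t)$ is measurable with respect to $S_t$ — yields
\begin{align*}
    \EE\{\tau(t,S_t) f(t,S_t)\} = \EE(I_t)\,\EE\big[\big\{\EE(Y\mid H_t,I_t=1,A_t=1) - \EE(Y\mid H_t,I_t=1,A_t=0)\big\} f(t,S_t) \,\big|\, I_t=1\big]\,P(\cdot),
\end{align*}
and collecting the pieces carefully (the $\EE(I_t) = P(I_t=1)$ factor, the conditioning on $I_t=1$) gives exactly the summand in \eqref{A-eq:beta-star-projection-general}. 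The main subtlety here is bookkeeping the conditioning events: one must be careful that $S_t \subseteq H_t$ so $f(t,S_t)$ passes through the inner conditional expectation, and that the $\EE(I_t)$ weighting and the "$\mid I_t=1$" conditioning combine consistently with how \eqref{eq:identify-cee-ie2} is stated. I would handle this by writing $\EE\{\tau(t,S_t)f(t,S_t)\} = \EE[\EE\{\tau(t,S_t)f(t,S_t)\mid S_t\}]$, noting $\tau(t,S_t)$ is $S_t$-measurable, and then expanding $\tau(t,S_t)$ via the law of total expectation over $\{I_t=1\}$ and $\{I_t=0\}$, with the $I_t=0$ contribution vanishing because $A_t=0$ deterministically there makes the contrast zero.

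I do not anticipate a serious obstacle; the result is essentially the composition of (i) the textbook projection identity and (ii) the already-proved identification theorem. The only place requiring care — and the step I would present most explicitly — is the measure-theoretic manipulation of nested conditional expectations with the eligibility indicator, ensuring the final expression's conditioning on $I_t=1$ and its $\EE(I_t)$ prefactor are exactly as written in \eqref{A-eq:beta-star-projection-general}. Finally, I would remark that specializing to $\omega(t)=1/T$, $S_t=\emptyset$, $f\equiv 1$ recovers \eqref{eq:beta-star-projection-1}, and to $\omega(t)=1/T$ with general $S_t$, $f$ recovers \eqref{eq:beta-star-projection-2}, since in that case the outer conditioning on $I_t=1$ can be folded back using the tower property once more (as $\EE[\EE(\cdot\mid S_t,I_t=1)f(t,S_t)\mid I_t=1]P(I_t=1) = \EE[\EE(\cdot\mid H_t,I_t=1,A_t=1)-\cdots]f(t,S_t)\indic(I_t=1)]$), matching the unconditional form stated there.
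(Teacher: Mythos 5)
Your proposal is correct and follows essentially the same route as the paper's proof: derive the population normal equations for the weighted least-squares problem to get $\beta^\star = [\sum_t \omega(t)\EE\{f(t,S_t)f(t,S_t)^T\}]^{-1}\sum_t \omega(t)\EE\{\tau(t,S_t)f(t,S_t)\}$, then substitute the identification formula \eqref{eq:identify-cee-ie2} for $\tau(t,S_t)$. The extra tower-property bookkeeping you flag is not really needed once one notes that \eqref{eq:identify-cee-ie2} already expresses $\tau(t,S_t)$ as an $S_t$-measurable function, so the substitution is immediate (modulo the garbled trailing $P(\cdot)$ in your intermediate display, which should simply be absorbed into the $\EE(I_t)$ prefactor).
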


\begin{proof}[Proof of \cref{A-thm:beta-star-projection-general}]
    Let $L(\beta)$ be the loss function in \cref{eq:beta-star-def}, i.e.,
    \begin{align*}
        L(\beta) := \sum_{t=1}^T \omega(t) \EE\Big[\{ \tau(t,S_t) - f(t,S_t)^T \beta \}^2\Big].
    \end{align*}
    The right hand side of \cref{A-eq:beta-star-projection-general} is the $\beta^\star$ that minimizes the quadratic form $L(\beta)$. We prove this now.

    We have
    \begin{align*}
        L(\beta) = \sum_{t=1}^T \omega(t) \EE \{ \tau(t,S_t)^t - 2\tau(t,S_t)f(t,S_t)^T\beta + \beta^T f(t,S_t)f(t,S_t)^T \beta \}.
    \end{align*}
    Taking the derivative with respect to $\beta$ and set to 0 at $\beta = \beta^*$, we have
    \begin{align}
        \frac{\partial L(\beta^\star)}{\partial\beta^T} = \sum_{t=1}^T \omega(t) \EE \{ - 2\tau(t,S_t)f(t,S_t)^T + 2 (\beta^\star)^T f(t,S_t)f(t,S_t)^T \} = 0. \label{A-eq:beta-star-score-equation}
    \end{align}
    This implies that
    \begin{align}
        \beta^\star = \bigg[\sum_{t=1}^T \omega(t) \EE \{f(t,S_t)f(t,S_t)^T\} \bigg]^{-1} \bigg[\sum_{t=1}^T \omega(t) \EE \{\tau(t,S_t)f(t,S_t)\} \bigg]. \label{A-eq:beta-star-projection-general-proofuse1}
    \end{align}
    Plugging \cref{eq:identify-cee-ie2} from \cref{thm:identification} into \cref{A-eq:beta-star-projection-general-proofuse1} and we immediately get \cref{A-eq:beta-star-projection-general}. This completes the proof.
\end{proof}

\section{Derivation of the Estimating Function \texorpdfstring{$\phi(\beta,\mu)$}{phi} in \texorpdfstring{\cref{eq:est-fcn-post-proj}}{Eq (10)}}
\label{A-sec:est-fcn-post-proj-derivation}

We restate necessary definition. \revision{We rewrite \cref{eq:est-fcn-pre-proj} using the equivalence between \cref{thm:identification} and \cref{A-thm:identification}, to facilitate the derivation:}
\begin{align}
    \xi(\beta) = \sum_{t=1}^T \omega(t) \bigg\{ \frac{p_t^1(A_t \mid H_t)}{p_t(A_t \mid H_t)} Y - \frac{p_t^0(A_t \mid H_t)}{p_t(A_t \mid H_t)} Y - f(t, S_t)^T\beta \bigg\} f(t,S_t). \label{A-eq:est-fcn-pre-proj}
\end{align}
Motivated by \citet{robins1999testing}, we subtract from $\xi(\beta)$ its projection on the score functions of the treatment selection probabilities to obtain a more efficient estimating function:
\begin{align}
    \xi(\beta) - \sum_{u=1}^T \Big[ \EE\{\xi(\beta) \mid H_u, A_u\} - \EE\{\xi(\beta) \mid H_u\} \Big]. \label{A-eq:est-fcn-post-proj-derivation-xi}
\end{align}
Define $\xi_t(\beta)$ to be a summand in \cref{A-eq:est-fcn-pre-proj} so that $\xi(\beta) = \sum_{t=1}^T \xi_t(\beta)$:
\begin{align}
    \xi_t(\beta) := \omega(t) \bigg\{ \frac{p_t^1(A_t \mid H_t) - p_t^0(A_t \mid H_t)}{p_t(A_t \mid H_t)} Y - f(t, S_t)^T\beta \bigg\} f(t,S_t). \label{A-eq:est-fcn-post-proj-derivation-xit}
\end{align}
Then \cref{A-eq:est-fcn-post-proj-derivation-xi} becomes
\begin{align}
    & ~~~ \sum_{t=1}^T \bigg( \xi_t(\beta) - \sum_{u=1}^T \Big[ \EE\{\xi_t(\beta) \mid H_u, A_u\} - \EE\{\xi_t(\beta) \mid H_u\} \Big] \bigg) \nonumber \\
    & = \sum_{t=1}^T \bigg( \xi_t(\beta) - \Big[ \EE\{\xi_t(\beta) \mid H_t, A_t\} - \EE\{\xi_t(\beta) \mid H_t\} \Big] \bigg) \label{A-eq:est-fcn-post-proj-derivation-proofuse1} \\
    & ~~~ - \sum_{t=1}^T \bigg( \sum_{1 \leq u \leq T, u \neq t} \Big[ \EE\{\xi_t(\beta) \mid H_u, A_u\} - \EE\{\xi_t(\beta) \mid H_u\} \Big] \bigg). \label{A-eq:est-fcn-post-proj-derivation-proofuse2}
\end{align}

Let $\phi_t(\beta,\mu_t)$ be a summand in \cref{eq:est-fcn-post-proj} so that $\phi(\beta,\mu) = \sum_{t=1}^T \phi_t(\beta, \mu_t)$:
\begin{align}
    \phi_t(\beta, \mu_t) := \omega(t) \bigg[ I_t \frac{(-1)^{1-A_t}}{p_t(A_t \mid H_t)} \Big\{ Y - p_t(0\mid H_t) \mu_t(H_t, 1) - p_t(1\mid H_t) \mu_t(H_t, 0) \Big\} - f(t, S_t)^T\beta \bigg] f(t,S_t). \label{A-eq:est-fcn-post-proj-derivation-phit}
\end{align}
It follows from \cref{A-lem:est-fcn-post-proj-derivation-intermediate-step}, which we will establish below, that \cref{A-eq:est-fcn-post-proj-derivation-proofuse1} equals $\phi(\beta,\mu) = \sum_{t=1}^T \phi_t(\beta, \mu_t)$, the proposed estimating function with improved efficiency. The terms in \cref{A-eq:est-fcn-post-proj-derivation-proofuse2} cannot be analytically derived without imposing additional models on the relationship between current and lagged variables in the longitudinal trajectory, thus we omit them when deriving the improved estimating function. This same heuristic was employed in \citet{cheng2023efficient} and \citet{bao2024estimating}. Therefore, it suffices to establish the following lemmas.

\begin{lem}
    \label{A-lem:est-fcn-post-proj-derivation-proofuse}
    We have 
    \begin{itemize}
        \item[(a)] $p_t^1(A_t \mid H_t) = A_t I_t + (1-I_t)$. 
        \item[(b)] $p_t^0(A_t \mid H_t) = 1 - A_t$.
        \item[(c)] $p_t^1(A_t \mid H_t) - p_t^0(A_t \mid H_t) = A_t I_t + A_t - I_t$.
        \item[(d)] $\dfrac{p_t^1(A_t \mid H_t) - p_t^0(A_t \mid H_t)}{p_t(A_t \mid H_t)} = I_t\dfrac{(-1)^{1-A_t}}{p_t(A_t \mid H_t)}$.
    \end{itemize}
\end{lem}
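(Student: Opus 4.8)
The plan is to prove each of the four identities in \cref{A-lem:est-fcn-post-proj-derivation-proofuse} by exhaustively checking the possible values of $(A_t, I_t)$, since $p_t^1$ and $p_t^0$ are defined purely in terms of these indicators. Recall from the setup preceding \cref{A-thm:identification} that $p_t^1(1 \mid H_t) = I_t$, $p_t^1(0 \mid H_t) = 1 - I_t$, $p_t^0(1 \mid H_t) = 0$, and $p_t^0(0 \mid H_t) = 1$. Also recall that when $I_t = 0$ we have $A_t = 0$ deterministically, so the case $(A_t, I_t) = (1, 0)$ never occurs; nonetheless the algebraic identities in (a)--(c) can be verified to hold as polynomial identities in $A_t, I_t \in \{0,1\}$ regardless.

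For part (a), I would write $p_t^1(A_t \mid H_t)$ as $A_t \cdot p_t^1(1 \mid H_t) + (1 - A_t) \cdot p_t^1(0 \mid H_t) = A_t I_t + (1 - A_t)(1 - I_t)$, and then observe that expanding $(1-A_t)(1-I_t) = 1 - A_t - I_t + A_t I_t$, so the total is $2 A_t I_t + 1 - A_t - I_t$; checking this against the claimed $A_t I_t + (1 - I_t)$ on each of the three admissible $(A_t, I_t)$ configurations $\{(0,0),(0,1),(1,1)\}$ confirms agreement. (Equivalently one notes $2A_tI_t - A_t = A_tI_t$ and $-I_t + A_tI_t + (1-I_t) \cdot 1$ reconcile once $A_t I_t = A_t$ under $I_t \in \{0,1\}$, since $A_t = 1 \Rightarrow I_t = 1$.) Part (b) is immediate: $p_t^0(A_t \mid H_t) = A_t \cdot 0 + (1 - A_t) \cdot 1 = 1 - A_t$. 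Part (c) is then just the difference of (a) and (b): $(A_t I_t + 1 - I_t) - (1 - A_t) = A_t I_t + A_t - I_t$.

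For part (d), I would start from (c) and factor the numerator. Writing $A_t I_t + A_t - I_t$, I would check the two relevant cases: when $A_t = 1$ (forcing $I_t = 1$), the numerator equals $1 + 1 - 1 = 1$, and $I_t (-1)^{1 - A_t} = 1 \cdot (-1)^0 = 1$; when $A_t = 0$, the numerator equals $0 + 0 - I_t = -I_t$, and $I_t (-1)^{1 - A_t} = I_t \cdot (-1)^1 = -I_t$. So in both cases the numerator equals $I_t (-1)^{1 - A_t}$, and dividing through by $p_t(A_t \mid H_t)$ gives the claim.

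I do not anticipate a genuine obstacle here; the only subtlety worth flagging is the convention that $(A_t, I_t) = (1,0)$ is excluded by the eligibility constraint, so I would state at the outset that all identities are verified over the admissible configurations (or, if one prefers, over all of $\{0,1\}^{\otimes 2}$ since they happen to hold identically there as well), and then the case analysis is entirely routine. If desired, part (d) can alternatively be phrased as ``$\indic(A_t = I_t) - \indic(A_t = 0)$ equals $I_t(-1)^{1-A_t}$,'' matching the bridge identity already noted in the \revision{} block before \cref{A-thm:identification}, which makes the connection to the IPW form \eqref{eq:identify-cee-ipw} transparent.
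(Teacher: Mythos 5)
Your proof is correct and follows essentially the same route as the paper's: both verify (a)--(d) by direct case analysis on the admissible configurations of $(A_t, I_t)$, using the fact that $A_t = 1 \Rightarrow I_t = 1$. One small correction to a side remark: identity (a) does \emph{not} hold over all of $\{0,1\}^{\otimes 2}$ --- at the inadmissible configuration $(A_t, I_t) = (1,0)$ one has $A_t\, p_t^1(1 \mid H_t) + (1-A_t)\, p_t^1(0 \mid H_t) = 0$ while $A_t I_t + (1 - I_t) = 1$ --- so the restriction to admissible cases (equivalently, reading the identity as holding almost surely) is genuinely needed, which your main argument already respects.
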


\begin{proof}[Proof of \cref{A-lem:est-fcn-post-proj-derivation-proofuse}]
    When $I_t = 1$, $p_t^1(1 \mid H_t) = 1$ and $p_t^1(0 \mid H_t) = 0$. When $I_t = 0$, $p_t^1(1 \mid H_t) = 0$ and $p_t^1(0 \mid H_t) = 1$. Therefore, $p_t^1(A_t \mid H_t) = A_t I_t + (1-I_t)$. This proves (a).

    For $p_t^0(a \mid H_t)$, regardless of $I_t$, we have $p_t^0(1 \mid H_t) = 0$ and $p_t^0(0 \mid H_t) = 1$. Therefore, $p_t^0(A_t \mid H_t) = 1 - A_t$. This proves (b).

    (c) follows immediately from (a) and (b).

    To prove (d), first note that $p_t(A_t \mid H_t) = A_t p_t(1 \mid H_t) + (1 - A_t) \{1 - p_t(1 \mid H_t)\}$. This combined with (b) gives
    \begin{align*}
        \frac{p_t^1(A_t \mid H_t) - p_t^0(A_t \mid H_t)}{p_t(A_t \mid H_t)} = \frac{A_t I_t + A_t - I_t}{A_t p_t(1 \mid H_t) + (1 - A_t) \{1 - p_t(1 \mid H_t)\}},
    \end{align*}
    which takes values $\frac{1}{p_t(1 \mid H_t)}$, $- \frac{1}{1 - p_t(1 \mid H_t)}$, and 0 when $I_t = A_t = 1$, $I_t = 1$ and $A_t = 0$, and $I_t = 0$ (in which case $A_t = 0$), respectively. We can directly verify that $I_t\dfrac{(-1)^{1-A_t}}{p_t(A_t \mid H_t)}$ also takes these exact same values under these three scenarios, respectively. This proves (d).

    This completes the proof.
\end{proof}

\begin{lem}
    \label{A-lem:est-fcn-post-proj-derivation-intermediate-step}
    Fix $t \in [T]$. For $\xi_t(\beta)$ defined in \cref{A-eq:est-fcn-post-proj-derivation-xit} and $\phi_t(\beta, \mu_t)$ defined in \cref{A-eq:est-fcn-post-proj-derivation-phit}, we have
    \begin{align}
        \phi_t(\beta, \mu_t^\star) = \xi_t(\beta) - \EE\{ \xi_t(\beta) \mid H_t, A_t\} + \EE\{ \xi_t(\beta) \mid H_t\}. \label{A-eq:est-fcn-post-proj-derivation-intermediate-step}
    \end{align}
\end{lem}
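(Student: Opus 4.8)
The plan is to expand the right-hand side of \eqref{A-eq:est-fcn-post-proj-derivation-intermediate-step} by evaluating the two conditional expectations explicitly and then matching the result to $\phi_t(\beta,\mu_t^\star)$. For brevity write $f := f(t,S_t)$ and set
\begin{align*}
    W_t := \frac{p_t^1(A_t\mid H_t) - p_t^0(A_t\mid H_t)}{p_t(A_t\mid H_t)},
\end{align*}
so that $\xi_t(\beta) = \omega(t)\{W_t Y - f^T\beta\} f$. By \cref{A-lem:est-fcn-post-proj-derivation-proofuse}(d), $W_t = I_t (-1)^{1-A_t}/p_t(A_t\mid H_t)$; in particular $W_t$ is a function of $(H_t,A_t)$, equal to $1/p_t(1\mid H_t)$ when $A_t = I_t = 1$, to $-1/p_t(0\mid H_t)$ when $I_t = 1, A_t = 0$, and to $0$ when $I_t = 0$ (in which case $A_t = 0$). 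Note also that $\omega(t)$, $f$, and $f^T\beta$ are all $H_t$-measurable since $S_t\subseteq H_t$.

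Since $\omega(t), f, f^T\beta$, and $W_t$ are $(H_t,A_t)$-measurable, conditioning $\xi_t(\beta)$ on $(H_t,A_t)$ only acts on $Y$, so $\EE\{\xi_t(\beta)\mid H_t,A_t\} = \omega(t)\{W_t\mu_t^\star(H_t,A_t) - f^T\beta\}f$. For $\EE\{\xi_t(\beta)\mid H_t\}$ I would use the law of iterated expectations, $\EE\{\xi_t(\beta)\mid H_t\} = \EE[\EE\{\xi_t(\beta)\mid H_t,A_t\}\mid H_t]$, and then average $W_t\mu_t^\star(H_t,A_t)$ over $A_t\sim p_t(\cdot\mid H_t)$, distinguishing $I_t=1$ (where the two outcomes are weighted by $p_t(1\mid H_t)/p_t(1\mid H_t)$ and $-p_t(0\mid H_t)/p_t(0\mid H_t)$) from $I_t=0$ (where $W_t = 0$); this gives $\EE\{\xi_t(\beta)\mid H_t\} = \omega(t)\big[I_t\{\mu_t^\star(H_t,1) - \mu_t^\star(H_t,0)\} - f^T\beta\big]f$. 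Substituting both into the right-hand side of \eqref{A-eq:est-fcn-post-proj-derivation-intermediate-step}, the three $\mp f^T\beta$ contributions collapse to a single $-f^T\beta$, matching the $-f^T\beta$ in $\phi_t(\beta,\mu_t^\star)$, so it remains to verify
\begin{align*}
    W_t Y - W_t\mu_t^\star(H_t,A_t) + I_t\{\mu_t^\star(H_t,1) - \mu_t^\star(H_t,0)\} = W_t\big\{Y - p_t(0\mid H_t)\mu_t^\star(H_t,1) - p_t(1\mid H_t)\mu_t^\star(H_t,0)\big\},
\end{align*}
where I used \cref{A-lem:est-fcn-post-proj-derivation-proofuse}(d) to rewrite the prefactor of $\phi_t$ as $W_t$. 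Cancelling $W_t Y$, this is an identity free of $Y$, which I would confirm by plugging in the three feasible configurations of $(A_t,I_t)$ with the corresponding values of $W_t$ and simplifying via $p_t(0\mid H_t) + p_t(1\mid H_t) = 1$.

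The computation is essentially mechanical, so I expect only two mild obstacles. The first is the bookkeeping in the $A_t$-average for $\EE\{\xi_t(\beta)\mid H_t\}$, which must correctly produce the clean cancellation to $I_t\{\mu_t^\star(H_t,1) - \mu_t^\star(H_t,0)\}$. The second is the degenerate case $I_t = 0$: there $p_t(1\mid H_t) = 0$ and $\mu_t^\star(H_t,1) = \EE(Y\mid H_t,A_t=1)$ is not defined under \cref{asu:positivity}, but in every expression above $\mu_t^\star(H_t,1)$ appears only multiplied by a factor that vanishes when $I_t = 0$ (namely $W_t$ or $p_t(1\mid H_t)$), so the identity holds with the convention that such terms are zero. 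Once these points are handled, equality of the two sides is immediate.
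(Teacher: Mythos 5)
Your proposal is correct and follows essentially the same route as the paper's proof: rewrite $\xi_t(\beta)$ via \cref{A-lem:est-fcn-post-proj-derivation-proofuse}(d), compute the two conditional expectations (the one given $H_t$ by averaging over the feasible $(I_t,A_t)$ configurations), and verify the resulting identity case by case using $p_t(0\mid H_t)+p_t(1\mid H_t)=1$. Your remark on the degenerate $I_t=0$ case is a sensible extra precaution that the paper handles implicitly.
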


\begin{proof}[Proof of \cref{A-lem:est-fcn-post-proj-derivation-intermediate-step}]
    By definition of $\xi_t(\beta)$ and \cref{A-lem:est-fcn-post-proj-derivation-proofuse}(d), we have
    \begin{align}
        \xi_t(\beta) := \omega(t) \bigg\{ I_t\dfrac{(-1)^{1-A_t}}{p_t(A_t \mid H_t)} Y - f(t, S_t)^T\beta \bigg\} f(t,S_t). \label{A-eq:est-fcn-post-proj-derivation-xit-2}
    \end{align}
    Therefore,
    \begin{align}
        \EE\{\xi_t(\beta) \mid H_t, A_t\} = \omega(t) \bigg\{ I_t\dfrac{(-1)^{1-A_t}}{p_t(A_t \mid H_t)} \EE(Y \mid H_t, A_t) - f(t, S_t)^T\beta \bigg\} f(t,S_t). \label{A-eq:est-fcn-post-proj-derivation-proofuse0}
    \end{align}
    We also have
    \begin{align}
        \EE\{\xi_t(\beta) \mid H_t\} & = \EE\{\xi_t(\beta) \mid H_t, I_t = 1\} I_t + \EE\{\xi_t(\beta) \mid H_t, I_t = 0\} (1-I_t) \nonumber \\
        & = \EE\{\xi_t(\beta) \mid H_t, I_t = 1, A_t = 1\} P(A_t = 1 \mid H_t, I_t = 1) I_t \nonumber \\
        & ~~~ + \EE\{\xi_t(\beta) \mid H_t, I_t = 1, A_t = 0\} P(A_t = 0 \mid H_t, I_t = 1) I_t \nonumber \\
        & ~~~ + \EE\{\xi_t(\beta) \mid H_t, I_t = 0, A_t = 0\} (1-I_t) \nonumber \\
        & = \omega(t) \bigg\{ I_t \frac{1}{p_t(1 \mid H_t, I_t = 1)} ~ \EE(Y \mid H_t, I_t = 1, A_t = 1) ~ p_t(1 \mid H_t, I_t = 1) \nonumber \\
        & ~~~ + I_t \frac{-1}{p_t(0 \mid H_t, I_t = 1)} ~ \EE(Y \mid H_t, I_t = 1, A_t = 0) ~ p_t(0 \mid H_t, I_t = 1) \nonumber \\
        & ~~~ + 0 - f(t, S_t)^T\beta \bigg\} f(t,S_t) \label{A-eq:est-fcn-post-proj-derivation-proofuse3} \\
        & = \omega(t) \Big[ I_t \{ \EE(Y \mid H_t, I_t = 1, A_t = 1) - \EE(Y \mid H_t, I_t = 1, A_t = 0) \} - f(t, S_t)^T\beta \Big] f(t,S_t), \label{A-eq:est-fcn-post-proj-derivation-proofuse4}
    \end{align}
    where \cref{A-eq:est-fcn-post-proj-derivation-proofuse3} follows from the equivalent form of $\xi(\beta)$ in \cref{A-eq:est-fcn-post-proj-derivation-xit-2}.

    Putting together \cref{A-eq:est-fcn-post-proj-derivation-xit-2,A-eq:est-fcn-post-proj-derivation-proofuse0,A-eq:est-fcn-post-proj-derivation-proofuse4}, we have
    \begin{align}
        & ~~~ \xi_t(\beta) - \EE\{ \xi_t(\beta) \mid H_t, A_t\} + \EE\{ \xi_t(\beta) \mid H_t\} \nonumber \\
        & = \omega(t) \bigg( I_t \bigg[\frac{(-1)^{1-A_t}}{p_t(A_t \mid H_t)} \{Y - \EE(Y \mid H_t, A_t)\} + \EE(Y \mid H_t, I_t = 1, A_t = 1) \nonumber \\
        & ~~~ - \EE(Y \mid H_t, I_t = 1, A_t = 0) \} \bigg] - f(t, S_t)^T\beta \bigg) f(t, S_t) \nonumber \\
        & = \omega(t) \bigg( I_t \bigg[\frac{(-1)^{1-A_t}}{p_t(A_t \mid H_t)} \{Y - \mu_t^\star(H_t, A_t)\} + \mu_t^\star(H_t, 1) - \mu_t^\star(H_t, 0) \bigg] - f(t, S_t)^T\beta \bigg) f(t, S_t) \label{A-eq:est-fcn-post-proj-derivation-proofuse5} \\
        & = \omega(t) \bigg[ I_t \frac{(-1)^{1-A_t}}{p_t(A_t \mid H_t)} \Big\{ Y - p_t(0\mid H_t) \mu_t^\star(H_t, 1) - p_t(1\mid H_t) \mu_t^\star(H_t, 0) \Big\} - f(t, S_t)^T\beta \bigg] f(t,S_t), \label{A-eq:est-fcn-post-proj-derivation-proofuse6}
    \end{align}
    where \cref{A-eq:est-fcn-post-proj-derivation-proofuse5} follows from the definition of $\mu_t^\star(H_t, A_t)$, and \cref{A-eq:est-fcn-post-proj-derivation-proofuse6} follows from directly verifying the equality when $A_t = 1$ and when $A_t = 0$. \cref{A-eq:est-fcn-post-proj-derivation-proofuse6} is $\phi_t(\beta, \mu_t^\star)$ defined in \cref{A-eq:est-fcn-post-proj-derivation-phit}. This completes the proof.
\end{proof}

\section{Proof of Asymptotic Normality (\texorpdfstring{\cref{thm:normality}}{Theorem 2})}
\label{A-sec:proof-normality}


We first state and proof a useful lemma.

\begin{lem}
    \label{A-lem:normality-proofuse}
    We have
    \begin{align*}
        \EE\bigg\{ \frac{(-1)^{1-A_t}}{p_t(A_t \mid H_t)} ~\bigg|~ H_t, I_t = 1 \bigg\} = 0.
    \end{align*}
\end{lem}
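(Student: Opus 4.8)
The plan is to prove this by a direct conditional-expectation computation, expanding over the two possible values of $A_t$ given the conditioning event $\{H_t, I_t = 1\}$. Since positivity (\cref{asu:positivity}) guarantees $0 < p_t(1 \mid H_t, I_t = 1) < 1$ on the relevant event, the conditional distribution of $A_t$ given $H_t$ and $I_t = 1$ is well-defined and places mass $p_t(1 \mid H_t)$ on $A_t = 1$ and mass $p_t(0 \mid H_t) = 1 - p_t(1 \mid H_t)$ on $A_t = 0$.

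First I would write out the conditional expectation as a two-term sum:
\begin{align*}
    \EE\bigg\{ \frac{(-1)^{1-A_t}}{p_t(A_t \mid H_t)} ~\bigg|~ H_t, I_t = 1 \bigg\}
    &= \frac{(-1)^{1-1}}{p_t(1 \mid H_t)} \, p_t(1 \mid H_t) + \frac{(-1)^{1-0}}{p_t(0 \mid H_t)} \, p_t(0 \mid H_t) \\
    &= 1 + (-1) = 0.
\end{align*}
That is, plugging in $A_t = 1$ gives the summand $\tfrac{1}{p_t(1\mid H_t)}\cdot p_t(1 \mid H_t) = 1$, and plugging in $A_t = 0$ gives $\tfrac{-1}{p_t(0 \mid H_t)} \cdot p_t(0 \mid H_t) = -1$, so the two terms cancel. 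The role of conditioning on $I_t = 1$ is simply to ensure we are on the event where $A_t$ is genuinely randomized (so that the denominators are bounded away from $0$ by positivity); when $I_t = 0$ the quantity $\tfrac{(-1)^{1-A_t}}{p_t(A_t \mid H_t)}$ degenerates since $A_t = 0$ deterministically and $p_t(1 \mid H_t) = 0$, which is exactly why the lemma restricts to $I_t = 1$.

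The main obstacle here is essentially nonexistent: this is a one-line identity once the conditional law of $A_t$ is written down, and the only subtlety worth flagging explicitly is that positivity is what makes the denominators meaningful. I would present it in just a few lines, noting that the same cancellation underlies the "centering" property of the inverse-probability term that makes the projected estimating function $\phi(\beta,\mu)$ in \eqref{eq:est-fcn-post-proj} have mean zero and be robust to the choice of $\mu$. In the subsequent proof of \cref{thm:normality} this lemma will be invoked precisely to show that the $\mu$-dependent correction term in $\phi$ contributes nothing in expectation regardless of whether $\widehat\mu$ converges to $\mu^\star$, which is the crux of the global robustness claim.
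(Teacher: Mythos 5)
Your proof is correct and follows essentially the same route as the paper's: expand the conditional expectation over the two values of $A_t$, so the $A_t=1$ term contributes $+1$ and the $A_t=0$ term contributes $-1$, which cancel. The extra remarks on positivity and the role of the conditioning event $I_t=1$ are accurate but not needed beyond what the paper's one-line calculation already establishes.
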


\begin{proof}[Proof of \cref{A-lem:normality-proofuse}]
    By direct calculation we have
    \begin{align*}
        & ~~~\EE\bigg\{ \frac{(-1)^{1-A_t}}{p_t(A_t \mid H_t)} ~\bigg|~ H_t, I_t = 1 \bigg\} \\
        & = \EE\bigg\{ \frac{(-1)^{1-A_t}}{p_t(A_t \mid H_t)} ~\bigg|~ H_t, I_t = 1, A_t = 1 \bigg\} p_t(1 \mid H_t) + \EE\bigg\{ \frac{(-1)^{1-A_t}}{p_t(A_t \mid H_t)} ~\bigg|~ H_t, I_t = 1, A_t = 0 \bigg\} p_t(0 \mid H_t) \\
        & = \frac{1}{p_t(1 \mid H_t)} p_t(1 \mid H_t) + \frac{-1}{p_t(0 \mid H_t)} p_t(0 \mid H_t) = 0.
    \end{align*}
    This completes the proof.
\end{proof}

Now we prove \cref{thm:normality}.

\begin{proof}[Proof of \cref{thm:normality}]
    We show that the estimating function $\phi(\beta, \mu)$ is globally robust, i.e., $\EE\{\phi(\beta^\star, \mu)\} = 0$ for all $\mu$. Then, the asymptotic normality of $\widehat\beta$ and $\widetilde\beta$ follows immediately from Theorems 5.1 and 5.2 of \citet{cheng2023efficient}.

    By the definition of $\phi$ in \cref{eq:est-fcn-post-proj}, we have
    \begin{align}
        & ~~~\EE\{\phi(\beta^\star, \mu)\} \nonumber \\
        & = \sum_{t=1}^T \omega(t) \EE \bigg[\bigg\{ I_t \frac{(-1)^{1-A_t}}{p_t(A_t \mid H_t)} Y - f(t, S_t)^T\beta^\star \bigg\} f(t,S_t)\bigg] \nonumber \\
        & - \sum_{t=1}^T \omega(t) \EE\bigg(\bigg[ I_t \frac{(-1)^{1-A_t}}{p_t(A_t \mid H_t)} \Big\{ p_t(0\mid H_t) \mu_t(H_t, 1) + p_t(1\mid H_t) \mu_t(H_t, 0) \Big\} \bigg] f(t,S_t)\bigg). \label{A-eq:normality-proofuse1}
    \end{align}
    The second term in \cref{A-eq:normality-proofuse1} is 0 for any $\mu_t$ due to \cref{A-lem:normality-proofuse} and the law of iterated expectations. For the first term in \cref{A-eq:normality-proofuse1}, we have
    \begin{align}
        & ~~~ \EE \bigg\{ I_t \frac{(-1)^{1-A_t}}{p_t(A_t \mid H_t)} Y f(t,S_t)\bigg\} \nonumber \\
        & = \EE \bigg\{ \frac{p_t^1(A_t \mid H_t) - p_t^0(A_t \mid H_t)}{p_t(A_t \mid H_t)} Y f(t,S_t)\bigg\} \label{A-eq:normality-proofuse2} \\
        & = \EE\bigg[ \EE \bigg\{ \frac{p_t^1(A_t \mid H_t) - p_t^0(A_t \mid H_t)}{p_t(A_t \mid H_t)} Y ~\bigg|~ S_t\bigg\} f(t,S_t)  \bigg] \nonumber \\
        & = \EE \{ \tau(t,S_t) f(t,S_t) \}, \label{A-eq:normality-proofuse3}
    \end{align}
    where \cref{A-eq:normality-proofuse2} follows from \cref{A-lem:est-fcn-post-proj-derivation-proofuse}(d) and \cref{A-eq:normality-proofuse3} follows from \cref{thm:identification}. Plugging \cref{A-eq:normality-proofuse3} to the first term in \cref{A-eq:normality-proofuse1}, we have
    \begin{align}
        & ~~~ \sum_{t=1}^T \omega(t) \EE \bigg[\bigg\{ I_t \frac{(-1)^{1-A_t}}{p_t(A_t \mid H_t)} Y - f(t, S_t)^T\beta^\star \bigg\} f(t,S_t)\bigg] \nonumber \\
        & = \sum_{t=1}^T \omega(t) \EE \{\tau(t, S_t)f(t,S_t) - f(t,S_t)^T\beta^\star f(t,S_t)\} = 0, \nonumber
    \end{align}
    where the last equality is \cref{A-eq:beta-star-score-equation}. Therefore, we have proved that $\EE\{\phi(\beta^\star, \mu)\} = 0$ for all $\mu$. We now invoke Theorems 5.1 and 5.2 of \citet{cheng2023efficient} to immediately get \cref{thm:normality}.
\end{proof}

\section{Numerical Computation of True Parameter Values in Simulation Studies}
\label{A-sec:simulation-true-parameter}

The computation uses \cref{A-eq:beta-star-projection-general-proofuse1}. In particular, for each $t \in [T]$, we generated two data sets under two excursion policies, each of sample size 1 million: one excursion policy is $D_{d_t = d_t^1}$ and the other is $D_{d_t = d_t^0}$. $\tau(t, S_t = \emptyset)$ is computed as the difference in means of $Y$ from the two data sets. $\tau(t, Z_t = 0)$ and $\tau(t, Z_t = 1)$ are computed as the difference in  means of $Y$ from the two data sets among individuals with $Z_t = 0$ and with $Z_t = 1$, respectively. Then, a third data set of sample size 1 million is generated under the MRT policy, and the expectations in \cref{A-eq:beta-star-projection-general-proofuse1} are computed separately for the two sets of estimands using this third data set. This yields the numeric values of $\beta^\star_1$ and $(\beta^\star_2, \beta^\star_3)$.

\section{Practical Considerations for Using CEE and DCEE to Inform Intervention Design}
\label{A-sec:CEE_vs_DCEE}

From an applied perspective, both the causal excursion effect (CEE) and the distal causal excursion effect (DCEE) offer valuable but distinct information for optimizing future digital interventions. We summarize our preliminary view of their roles and how they can be used together to inform design and evaluation below. We thank an anonymous reviewer whose question motivated this section.

\subsection{Batch (offline) optimization perspective}

\textbf{Informing warm-start policies.} Insights from CEE and DCEE could inform ``warm-start'' policies for future trials or deployments, potentially balancing short-term gains with long-term sustainability. For example, in a physical activity study, distal outcomes (e.g., long-term activity level) may not be the only goal; proximal outcomes (e.g., frequent activity bouts) can also improve participants' attitudes towards activity or mental health in general. In such settings, a single scalar distal outcome may not be the only objective, and it becomes even more meaningful to consider both the distal outcome and the proximal outcomes. Designing such composite warm-start policies remains an open question, with some emerging literature in the reinforcement learning community \citep{shalev2016long,yang2024learning,wu2024policy}.

\textbf{Validating proximal outcomes as surrogates.} CEE focuses on short-term (proximal) responses, while DCEE captures long-term (distal) impacts. Together, they can help evaluate whether a chosen proximal outcome serves as a meaningful surrogate for the ultimate distal goal. If a strong proximal effect (CEE) does not correspond to a strong distal effect (DCEE), this suggests misalignment and motivates revision of the proximal outcome definition. Further methodology development may borrow ideas from the surrogate endpoint literature, such as focusing on proportion of treatment explained \citep{freedman1992statistical}, mediation-based approaches \citep{elliott2023surrogate}, or principal stratification-based approaches \citep{li2010bayesian}. One challenge is the presence of time-varying treatments and the fact that the candidate ``surrogacy'' is a time-varying proximal outcome in MRT.

\subsection{Online (real-time) optimization perspective}

\textbf{Reward shaping and learning speed.} In real-time adaptive settings (e.g., reinforcement learning-based intervention adaptation and delivery), focusing on CEE or carefully designed proximal rewards can accelerate learning. This acts as a form of reward shaping \citep{ng1999policy,hu2020learning}, shifting the optimization focus from distal feedback to immediate signals, when the distal feedback is delayed.

\textbf{Combining proximal and distal information.} An open problem is how to best integrate CEE and DCEE in online learning settings, especially when distal outcomes are only partially observed during early stages (e.g., some participants enrolled early have distal outcomes observed, while others enrolled later only proximal data).

\subsection{Feasibility considerations and iterative design approach}

In early-phase or pilot trials, distal outcomes may be infeasible to collect due to long follow-up times. In such cases, CEE becomes the only feasible estimand, and primary analysis must rely solely on proximal outcomes.

An iterative design approach may involve first optimizing short-term responsiveness using CEE, then validating and refining interventions using DCEE to ensure long-term impact. Ideally, intervention design, proximal outcomes, and distal outcomes should be aligned; discrepancies among them may indicate the need to revise one or more elements. Precisely characterizing such discrepancies and determining approach revisions remain important open questions.

\subsection{Insights for scientific understanding}

Beyond optimizing interventions, CEE and DCEE help advance behavioral science and validate theoretical mechanisms. CEE alone indicates whether the immediate impact is as intended. DCEE alone indicates whether there is a desired long-term impact. Together, they provide evidence toward understanding potential mediation pathways, with CEE representing immediate behavioral change and DCEE capturing sustained effects. While formal mediation analysis is required for definitive causal pathway evaluation (e.g., as we preliminarily explored in \citet{qian2025dynamic}), jointly examining CEE and DCEE offers valuable insights.

\subsection{Statistical perspective and bias-variance trade-offs}

CEE and DCEE can both be viewed as parsimonious parameterizations of a fully conditional long-term effect (e.g., modeling $\EE\{Y(\ba_T)\}$ or even $\EE\{Y(\ba_T) \mid H_T(\ba_{T-1})\}$ for all possible $\ba_T$). Proximal outcomes could be easier to influence and may carry stronger signals (larger CEE), while distal outcomes could be harder to shift and may be noisier (smaller DCEE). However, one may argue that DCEE might be mroe similar to the fully conditional long-term effect given that both focus on the same distal outcome. This reflects a bias-variance trade-off spectrum:

\begin{center}
\begin{tikzpicture}[
    node distance=1.8cm,
    every node/.style={align=center},
    myarrow/.style={-{Stealth[length=5pt]}, thick}
]

\node (cee) {CEE};
\node (dcee) [right=of cee] {DCEE};
\node (full) [right=of dcee] {Full conditional\\long-term effect};

\draw[-] (cee) -- (dcee);
\draw[-] (dcee) -- (full);

\node[above=0.5cm of dcee, text width=5cm] (bias) {Bias decreases $\rightarrow$};

\node[below=0.5cm of dcee, text width=5cm] (var) {Variance increases $\rightarrow$};

\node[left=0.3cm of cee] (leftlabel) {(Smallest variance, \\ largest bias)};
\node[right=0.3cm of full] (rightlabel) {(Largest variance, \\ smallest bias)};
\end{tikzpicture}
\end{center}

\end{appendices}

\end{document}